\newtheorem{theorem}{Theorem}
\theoremstyle{plain}
\newtheorem{corollary}{Corollary}
\newtheorem*{lemma}{Lemma}
\newtheorem{proposition}{Proposition}
\newtheorem*{xrem}{Remark}
\numberwithin{equation}{section}
\begin{document}
\date{October 2010}
\keywords{Quantum Stochastic Calculus, Non-adapted Stochastic Integrals, White Noise Analysis, Quantum Stochastic Processes.}


\title[Quantum Stochastic Integrals and Differentials]{$\mathrm{Q}$-Adapted Quantum Stochastic Integrals and Differentials in Fock Scale}

\author{V. P. Belavkin}
\address{School of Mathematical Sciences, University Park\\
Nottingham, NG7 2RD, UK\\
E-mail: viacheslav.belavkin@nottingham.ac.uk}

\author{M. F. Brown}
\address{School of Mathematical Sciences, University Park\\
Nottingham, NG7 2RD, UK\\
E-mail: pmxmb1@nottingham.ac.uk}

\maketitle

\begin{abstract}
In this paper we first introduce the Fock-Guichardet formalism for the quantum stochastic (QS) integration, then the four fundamental processes of the dynamics are introduced in the canonical basis as the operator-valued measures, on a space-time $\sigma$-field $\mathfrak{F}_\mathbb{X}$, of the QS integration. Then rigorous analysis of the QS integrals is carried out, and continuity of the QS derivative $\mathbf{D}$ is proved. Finally, $\mathrm{Q}$-adapted dynamics is discussed, including Bosonic ($\mathrm{Q}=\mathrm{I}$), Fermionic ($\mathrm{Q}=-\mathrm{I}$), and monotone ($\mathrm{Q}=\mathrm{O}$) quantum dynamics. These may be of particular interest to quantum field theory, quantum open systems, and quantum theory of stochastic processes.
\end{abstract}

\section{Introduction}
Non-commutative generalization of the It\^{o} stochastic calculus, developed
in {\cite{AccF88}}, {\cite{AccQ89}}, {\cite{EvaH88}}, {\cite{LinM88a}}, {\cite{Mey87}} and {\cite{ParS86}} gave an
adequate mathematical tool for studying the behavior of open quantum dynamical
systems singularly interacting with a boson quantum-stochastic field. Quantum
stochastic calculus also made it possible to solve an old problem of
describing such systems with continuous observation and constructing a quantum
filtration theory which would explain a continuous spontaneous collapse under
the action of such observation {\cite{Be88a}}, {\cite{Be88c}} and {\cite{Be92}}. This gave
examples of stochastic non-unitary, non-stationary, and even non-adapted
evolution equations in a Hilbert space whose solution requires a proper
definition of chronologically ordered quantum stochastic semigroups, and
exponents of operators, by extending the notion of the multiple stochastic
integral to non-commuting objects.

Here is the first part of an outline of the solution to this important problem by developing a \emph{$Q$-adapted} form of the new quantum stochastic calculus constructed in {\cite{Be91}} in a natural scale of Fock spaces. It is based on an explicit definition, introduced in {\cite{Be90c}}, of the non-adapted quantum stochastic integral, as a non-commutative generalization of the Skorokhod integral {\cite{Sko75}} represented in the Fock space. The point derivative of the quantum stochastic calculus is discussed as an operator on the scaled Fock space, and consequentially the single integral operator-kernels are derived from the operator-kernels of the multiple stochastic integral. These quantum stochastic derivatives are then presented in an explicit Q-adapted form, and we recover the Fermionic anti-commutation relation as well as the trivial Bosonic commutator of the usual adapted process.

The approach used here is similar in spirit to the kernel calculus of
Maassen-Lindsay-Meyer {\cite{LinM88a}}, {\cite{Mey87}}, however the difference is that all
the main objects are constructed not in terms of kernels but in terms of
operators represented in the Fock space. In addition we employ a much more
general notion of multiple stochastic integral, non-adapted in general but focusing now on Q-adapted processes, which
reduces to the notion of the kernel representation of an operator only in the
case of a scalar (non-random) operator function under the integral. The
possibility of defining a non-adapted single integral in terms of the kernel
calculus was shown by Lindsay {\cite{Lin90}}, but the notion of the multiple
quantum-stochastic integral was introduced in {\cite{Be91}}.

\section{Rigged Guichardet-Fock Space}

Let $(\mathbb{X},\lambda)$ be an essentially ordered space, that is, a measurable space
$\mathbb{X}$ with a $\sigma$-finite measure $\lambda:\mathfrak{F}_{\mathbb{X}}\ni
\boldsymbol{\bigtriangleup}\mapsto\lambda\left(  \boldsymbol{\bigtriangleup
}\right)  \geq0$ and an ordering relation $x\leq x^{\prime}$ with the property
that any $n$-tuple $\left(  x_{1},\ldots,x_{n}\right)  \in \mathbb{X}^{n}$ can be
identified up to a permutation with a chain $\varkappa=\{x_{1}<\cdots<x_{n}\}$
modulo the product measure $\prod_{i=1}^{n}\mathrm{d}x_{i}$ of $\mathrm{d}%
x:=\lambda\left(  \mathrm{d}x\right)  $. In other words, we assume that the
measurable ordering is almost total, that is, for any $n$ the product measure
of $n$-tuples $\boldsymbol{s}\in \mathbb{X}^{n}$ with components $\left(  x_{1}%
,\ldots,x_{n}\right)  $ that are not comparable is zero. Hence, in particular,
it follows that the measure $\lambda$ on $\mathbb{X}$ is atomless and we may assume
that this essentially total ordering on $\mathbb{X}$ is induced from the linear order
in $\mathbb{R}_{+}$ by a measurable map $t:\mathbb{X}\rightarrow\mathbb{R}_{+}$
relatively to which $\lambda$ is absolutely continuous with respect to the
Lebesgue measure $\mathrm{d}t$ on $\mathbb{R}_{+}$ in the sense of admitting
the disintegration
\[
\left\langle f\circ t,1_{\boldsymbol{\bigtriangleup}}\right\rangle _{\lambda
}:=\int_{\boldsymbol{\bigtriangleup}}f(t(x))\lambda\left(  \mathrm{d}x\right)
=\int_{0}^{\infty}f(t)\lambda_{\boldsymbol{\bigtriangleup}}(t)\mathrm{d}%
t\equiv\left\langle f,\lambda_{\boldsymbol{\bigtriangleup}}\right\rangle .
\]
Here $1_{\boldsymbol{\bigtriangleup}}$ is the indicator of any integrable
subset $\boldsymbol{\bigtriangleup}\subseteq \mathbb{X}$ and $f$ is any essentially
bounded function $f:\mathbb{R}_{+}\rightarrow\mathbb{C}$ and
$\boldsymbol{\bigtriangleup}\mapsto\lambda_{\boldsymbol{\bigtriangleup}}(t)$
is defined by duality as a positive measure on $\mathbb{X}$ for each $t\in
\mathbb{R}_{+} $. In any case we will fix a map $t$ that the above condition
holds and $t(x)<t(x^{\prime})$ if $x<x^{\prime}$, interpreting $t(x)$ as the
time at the point $x\in \mathbb{X}$. For example, $t(x)=t$ for $x=(\vec{x},t)$ if
$\mathbb{X}=\mathbb{R}^{d}\times\mathbb{R}_{+}$ is the $(d+1)$-dimensional space-time
with the casual ordering {\cite{Be85}} and $\mathrm{d}x=\mathrm{d}\vec{x}%
\mathrm{d}t$, where $\mathrm{d}\vec{x}$ is the standard volume element on
$d$-dimensional space $\mathbb{R}^{d}\ni\vec{x}$.

We shall identify the finite chains $\varkappa$ with increasingly indexed $n
$-tuples $(x_{1},\ldots,x_{n})\equiv\boldsymbol{s}$ with $x_{i}\in \mathbb{X}$, $x_{1}%
<\cdots<x_{n}$, denoting by $\mathcal{X}=\sum_{n=0}^{\infty}\mathcal{X}_{n}$
the set of all finite chains as the union of the sets
\[
\mathcal{X}_{n}=\{\boldsymbol{s}\in \mathbb{X}^{n}:x_{1}<\cdots<x_{n}\}
\]
with one-element $\mathcal{X}_{0}=\{\emptyset\}$ containing the empty chain as
a subset of $\mathbb{X}$: $\emptyset=\mathbb{X}^{0}$. We introduce a measure `element'
$\mathrm{d}\varkappa=\prod_{x\in\varkappa}\mathrm{d}x$ on $\mathcal{X}$
induced by the direct sum $\oplus_{n=0}^{\infty}\lambda^{\otimes n}\left(
\boldsymbol{\bigtriangleup}_{n}\right)  ,\boldsymbol{\bigtriangleup}_{n}%
\in\mathfrak{F}_{\mathbb{X}}^{\otimes n}$ of product measures $\mathrm{d}%
\boldsymbol{s}=\prod_{i=1}^{n}\mathrm{d}x_{i}$ on $\mathbb{X}^{n}$ with the unit mass
$\mathrm{d}\varkappa=1$ at the only atomic point $\varkappa=\emptyset$.

Let $\{\mathfrak{k}_{x}:x\in \mathbb{X}\}$ be a family of Hilbert spaces $\mathfrak{k}%
_{x}$, let $\mathfrak{p}_{0}$ be an additive semigroup of nonnegative
essentially measurable locally bounded functions $q:\mathbb{X}\rightarrow\mathbb{R}%
_{+}$ with zero included $0\in\mathfrak{p}_{0}$, and let $\mathfrak{p}_{1}%
=\{1+q_{0}:q_{0}\in\mathfrak{p}_{0}\}$. For example, in the case
$\mathbb{X}=\mathbb{R}^{d}\times\mathbb{R}_{+}$ by $\mathfrak{p}_{1}$ we mean the set
of polynomials $q(x)=1+\sum_{k=0}^{m}c_{k}|\vec{x}|^{k}$ with respect to the
modulus $|\vec{x}|=(\Sigma x_{i}^{2})^{1/2}$ of a vector $\vec{x}\in
\mathbb{R}^{d}$ with coefficients $c_{k}\geq0$. We denote by
$\mathit{K}_{\star}(q)$ the Hilbert space of essentially measurable
vector-functions $\mathrm{k}:x\mapsto\mathrm{k}(x)\in\mathfrak{k}_{x}$ which
are square integrable with the weight $q\in\mathfrak{p}_{1}$:
\[
\left\Vert \mathrm{k}\right\Vert (q)=\left(  \int\left\Vert \mathrm{k}%
(x)\right\Vert _{x}^{2}q(x)\mathrm{d}x\right)  ^{1/2}<\infty.
\]
With $q\geq1$, any space $\mathit{K}_{\star}(q)$ can be embedded into the
Hilbert space $\mathcal{K}_\ast=\mathit{K}_{\star}(1)$, and the intersection
$\cap_{q\in\mathfrak{p}_{1}}\mathit{K}_{\star}(q)\subseteq\mathcal{K}_\ast$
can be identified with the projective limit $\mathit{K}_{+}=\lim
_{q\rightarrow\infty}\mathit{K}_{\star}(q)$. This follows from the facts that
the function $\left\Vert \mathrm{k}\right\Vert (q)$ is increasing: $q\leq
p\Rightarrow\left\Vert \mathrm{k}\right\Vert (q)\leq\left\Vert \mathrm{k}%
\right\Vert (p)$, and so $\mathit{K}_{\star}(p)\subseteq\mathit{K}_{\star}(q)$,
and that the set $\mathfrak{p}_{1}$ is directed in the sense that for any $q=1+r$
and $p=1+s$, $r,s\in\mathfrak{p}_{0}$, there is a function in $\mathfrak{p}%
_{1}$ majorizing $q$ and $p$ (we can take for example $q+p-1=1+r+s\in
\mathfrak{p}_{1}$). In the case of polynomials $q\in\mathfrak{p}_{1}$ on
$\mathbb{X}=\mathbb{R}^{d}\times\mathbb{R}_{+} $ the decreasing family $\{\mathit{K}%
_{\star}(q)\}$, where $\mathfrak{k}_{x}=\mathbb{C}$, is identical with the
integer Sobolev scale of vector fields $\mathrm{k}:\mathbb{R}^{d}%
\rightarrow\mathit{L}^{2}(\mathbb{R}_{+})$ with values $\mathrm{k}%
(x)(t)=\mathrm{k}(x,t)$ in the Hilbert space $\mathit{L}^{2}(\mathbb{R}_{+})$
of square integrable functions on $\mathbb{R}_{+}$. If we replace
$\mathbb{R}^{d}$ by $\mathbb{Z}^{d}$ and if we restrict ourselves to the
positive part of the integer lattice $\mathbb{Z}^{d}$, then we obtain the
Schwartz space in the form of vector fields $\mathrm{k}\in\mathit{K}_{+}$.

The dual space $\mathit{K}_{\star}^{-}$ to$\ \mathit{K}_{+}$ is the space of
generalized vector-functions $\mathrm{f}\left(  x\right)  $ defining the
continuous functionals
\[
\left\langle \mathrm{f}\mid\mathrm{k}\right\rangle =\int\left\langle
\mathrm{f}(x)\mid\mathrm{k}(x)\right\rangle \,\mathrm{d}x,\quad\mathrm{k}%
\in\mathit{K}_{+}.
\]
It is the inductive limit $\mathit{K}_{-}=\lim_{q\rightarrow0}\mathit{K}%
_{\star}(q)$ in the opposite scale $\{\mathit{K}_{\star}(q):q\in\mathfrak{p}%
_{-}\}$, where $\mathfrak{p}_{-}$ is the set of functions $q:\mathbb{X}\rightarrow
(0,1]$ such that $1/q\in\mathfrak{p}_{1}$, which is the union $\cup
_{q\in\mathfrak{p}_{-}}\mathit{K}_{\star}(q)$ of the inductive family of
Hilbert spaces $\mathit{K}_{\star}(q)$, $q\in\mathfrak{p}_{-}$, with the norms
$\left\Vert \mathrm{k}\right\Vert (q)$, containing as the minimal the space
$\mathcal{K}_{\ast}$. Thus we obtain the Gel'fand chain
\[
\mathit{K}_{+}\subseteq\mathit{K}_{\star}(q_{+})\subseteq\mathcal{K}_{\ast
}\subseteq\mathit{K}_{\star}(q_{-})\subseteq\mathit{K}_{-}%
\]
in the extended scale $\{\mathit{K}_{\star}(q):q\in\mathfrak{p}\}$, where
$\mathfrak{p}=\mathfrak{p}_{-}\cup\mathfrak{p}_{1}$, with $q_{+}%
\in\mathfrak{p}_{1},\,q_{-}\in\mathfrak{p}_{-}$. The dual space $\mathit{K}%
_{+}^{\star}=\mathit{K}^{-}$ is the space of the continuous linear functionals
on $\mathit{K}_{+}$ containing the Hilbert space $\mathcal{K}$ called
the\emph{\ rigged space} with respect to the dense subspace $\mathit{K}%
^{+}=\mathit{K}_{-}^{\star}$ of $\mathcal{K}$ equipped with the projective
convergence in the scale $\left\Vert \mathrm{k}^{\ast}\right\Vert \left(
q\right)  =\left\Vert \mathrm{k}\right\Vert \left(  q\right)  $ for
$q\in\mathfrak{p}_{1}$.

We can similarly define a Fock-Gel'fand triple $(\mathit{F}_{+},\mathcal{F}%
_{\ast},\mathit{F}_{-})$  with%
\[
\mathit{F}_{+}=\cap_{q\in\mathfrak{p}_{1}}\mathit{F}_{\star}(q),\;\mathcal{F}%
_{\ast}=\mathit{F}_{\star}(1),\;\mathit{F}_{-}=\cup_{q\in\mathfrak{p}_{-}%
}\mathit{F}_{\star}(q),
\]
for the Hilbert scale $\{\mathit{F}_{\star}%
(q):q\in\mathfrak{p}\}$ of the symmetric Fock spaces $\mathit{F}_{\star}(q)=\oplus_{n=0}^\infty K_\star^{(n)}(q)$ over
$\mathit{K}_{\star}(q)$, where $K_\star^{(0)}(q)=\mathbb{C}$, $K_\star^{(1)}(q)=K_\star(q)$, and each $K_\star^{(n)}(q)$ for $n>1$ is given by the product-weight $q_n(x_1,\ldots,x_n)=\prod_{i=1}^nq(x_i)$ on $\mathbb{X}^n$. We shall consider the Guichardet {\cite{Gui72}} representation of the symmetric tensor-functions $\psi_n\in K^{(n)}_\star(q)$ regarding them as the restrictions $\psi|\mathcal{X}_n$ of the functions $\psi:\varkappa\mapsto\psi(\varkappa)\in K_\star^\otimes(\varkappa)$ with sections in the Hilbert products $\mathit{K}_{\star}^{\otimes}(\varkappa)=\otimes
_{x\in\varkappa}\mathfrak{k}_{x}$, square integrable with the product weight
$q(\varkappa)=\prod_{x\in\varkappa}q(x)$:
\[
\Vert\psi\Vert(q)=\left(  \int\Vert\psi(\varkappa)\Vert^{2}q(\varkappa
)\,\mathrm{d}\varkappa\right)  ^{1/2}<\infty.
\]
The integral here is over all chains $\varkappa\in\mathcal{X}$ and defines the
pairing on $\mathit{F}_{+}$ by
\[
\left\langle \psi\mid\psi\right\rangle =\int\left\langle \psi(\varkappa
)\mid\psi(\varkappa)\right\rangle \,\mathrm{d}\varkappa,\quad\psi\in
\mathit{F}_{+}.
\]
In more detail we can write this in the form
\[
\int\Vert\psi(\varkappa)\Vert^{2}q(\varkappa)\mathrm{d}\varkappa=\sum
_{n=0}^{\infty}\int\limits_{0\leq t_{1}<}\cdots\int\limits_{<t_{n}<\infty
}\Vert\psi(x_{1},\ldots,x_{n})\Vert^{2}\prod_{i=1}^{n}q(x_{i})\mathrm{d}x_{i},
\]
where the $n$-fold integrals for $\psi_n\in K_\star^{(n)}$ are taken over simplex domains $\mathcal{X}%
_{n}=\{\boldsymbol{s}\in \mathbb{X}^{n}:t(x_{1})<\cdots<t(x_{n})\}$.

One can easily establish
an isomorphism between the space $\mathit{F}_{\star}(q)$ and the symmetric (or
antisymmetric) Fock space over $\mathit{K}_{\star}(q)$ with a nonatomic measure $\mathrm{d}x$ in $\mathbb{X}$. It is defined
by the isometry
\[
\Vert\psi\Vert(q)=\left(  \sum_{n=0}^{\infty}\frac{1}{n!}\idotsint\Vert
\psi(x_{1},\ldots,x_{n})\Vert^{2}\prod_{i=1}^{n}q(x_{i})\mathrm{d}%
x_{i}\right)  ^{1/2},
\]
where the functions $\psi(x_{1},\ldots x_{n})$ can be extended to the whole of
$\mathbb{X}^{n}$ in a symmetric (or antisymmetric) way uniquely up to the measure zero due to nonatomicity of $\mathrm{d}x$ on $\mathbb{X}$.

\section{Explicit Definition of QS Integrals}

Let $\mathfrak{h}$ be a Hilbert space called the initial space for the Hilbert
products $\mathcal{H}_{\ast}=\mathfrak{h}\otimes\mathcal{K}_{\ast} $ and
$\mathcal{G}_{\ast}=\mathfrak{h}\otimes\mathcal{F}_{\ast}$. We consider the
Hilbert scale $\mathit{G}_{\star}(q)=\mathfrak{h}\otimes\mathit{F}_{\star}(q)$,
$q\in\mathfrak{p}$ of complete tensor products of $\mathfrak{h}$ and the Fock
spaces over $\mathit{K}_{\star}(q)$, and we put%
\[
\mathit{G}_{+}=\cap\mathit{G}_{\star}(q),\;\;\mathit{G}_{-}=\cup\mathit{G}%
_{\star}(q)
\]
which constitute the Gel'fand triple $\mathit{G}_{+}\subseteq\mathcal{G}%
_{\ast}\subseteq\mathit{G}_{-}$ dual to $\mathit{G}^{+}\subseteq
\mathcal{G}\subseteq\mathit{G}^{-}$ of the Hermitian adjoint bra-spaces
$\mathit{G}^{+}=\mathit{G}_{+}^{\ast}$, $\mathcal{G}=\mathcal{G}_{\ast
}^{\ast}$, $\mathit{G}^{-}=\mathit{G}_{-}^{\ast}$.

Let $(\mathrm{D}_{\nu}^{\mu})_{\nu=\circ,+}^{\mu=-,\circ}$ be a quadruple of
functions $\mathrm{D}_{\nu}^{\mu}$ on $\mathbb{X}$ with kernel values $\mathrm{D}_{\nu
}^{\mu}\left(  x\right)  :\mathit{G}_{+}\rightarrow\mathit{G}_{-}$ for
$\mathfrak{k}_{x}=\mathbb{C}$, or, if $\mathfrak{k}_{x}\neq\mathbb{C}$, as
continuous operators
\begin{align}
&  \mathrm{D}_{+}^{-}(x):\mathit{G}_{+}\rightarrow\mathit{G}_{-}%
,\;\;\;\;\;\;\ \mathrm{D}_{\circ}^{\circ}(x):\mathfrak{k}_{x}\otimes
\mathit{G}_{+}\rightarrow\mathfrak{k}_{x}\otimes\mathit{G}_{-}%
,\nonumber\label{2onea}\\
&  \mathrm{D}_{+}^{\circ}(x):\mathfrak{k}_{x}\otimes\mathit{G}_{+}%
\rightarrow\mathit{G}_{-},\;\;\;\;\;\;\mathrm{D}_{\circ}^{-}(x):\mathfrak{k}%
_{x}\otimes\mathit{G}_{+}\rightarrow\mathit{G}_{-}.
\end{align}
The continuity means that there is a $q\in\mathfrak{p}_{1}$ such that these
operators are bounded from $\mathit{G}_{\star}(q)\supseteq\mathit{G}_{+}$ to
$\mathit{G}(q)^{\star}\subseteq\mathit{G}_{-}$, where $\mathit{G}(q)^{\star
}=\mathit{G}_{\star}(q^{-1})$. We assume that $\mathrm{D}_{+}^{-}(x)$ is
locally integrable in the sense that
\[
\exists\,q\in\mathfrak{p}_{1}:\Vert\mathrm{D}_{+}^{-}\Vert_{q,t}^{(1)}%
=\int_{\mathbb{X}^{t}}\Vert\mathrm{D}_{+}^{-}(x)\Vert_{q}\mathrm{d}x<\infty
,\quad\forall t<\infty,
\]
where $\mathbb{X}^{t}=\{x\in \mathbb{X}:t(x)<t\}$, and%
\[
\Vert\mathrm{D}\Vert_{q}=\underset{\chi\in G_\star(q)}{\sup}\{\Vert\mathrm{D}\chi\Vert(q^{-1})/\Vert\chi
\Vert(q)\}
\]
is the norm of the continuous operator $\mathrm{D}:\mathit{G}_{\star
}(q)\rightarrow\mathit{G}_{\star}(q^{-1})$ which defines a bounded Hermitian
form $\mathrm{D}\left(  \chi;\chi\right)  :=\left\langle \chi\mid
\mathrm{D}\chi\right\rangle $ on $\mathit{G}_{\star}(q)$. We also assume that
$\mathrm{D}_{\circ}^{\circ}(x)$ is locally bounded with respect to a strictly
positive function $s$ of $x$ such that $1/s\in\mathfrak{p}_{0}$ in the sense
that
\[
\exists\,q\in\mathfrak{p}_{1}:\Vert\mathrm{D}_{\circ}^{\circ}\Vert
_{q,t}^{(\infty)}(s)=\mathrm{ess}\sup_{x\in \mathbb{X}^{t}}\{s(x)\Vert\mathrm{D}%
_{\circ}^{\circ}(x)\Vert_{q}\}<\infty,\quad\forall t<\infty;
\]
here $\Vert\mathrm{D}\Vert_{q}$ is the norm of the operator $\mathrm{D}%
:\mathfrak{k}_{x}\otimes\mathit{G}_{\star}(q)\rightarrow\mathfrak{k}_{x}%
\otimes\mathit{G}_{\star}(q^{-1})$. Finally, we assume that $\mathrm{D}%
_{+}^{\circ}(x)$ and $\mathrm{D}_{\circ}^{-}(x)$ are locally square integrable
with strictly positive function $r(x)$ such that $1/r\in\mathfrak{p}_{0}$, in
the sense that
\[
\exists\,q\in\mathfrak{p}_{1}:\Vert\mathrm{D}_{+}^{\circ}\Vert_{q,t}%
^{(2)}(r)<\infty,\;\;\;\;\Vert\mathrm{D}_{\circ}^{-}\Vert_{q,t}^{(2)}%
(r)<\infty,\quad\forall t<\infty,
\]
where $\Vert\mathrm{D}\Vert_{q,t}^{(2)}(r)=(\int_{\mathbb{X}^{t}}\Vert\mathrm{D}%
(x)\Vert_{q}^{2}r(x)\mathrm{d}x)^{1/2}$ and $\Vert\mathrm{D}\Vert_{q}$ are the
respective norms of the operators
\[
\mathrm{D}_{+}^{\circ}(x):\mathit{G}_{\star}(q)\rightarrow\mathfrak{k}%
_{x}\otimes\mathit{G}_{\star}(q^{-1}),\;\;\;\;\;\;\,\mathrm{D}_{\circ}%
^{-}(x):\mathfrak{k}_{x}\otimes\mathit{G}_{\star}(q)\rightarrow\mathit{G}%
_{\star}(q^{-1}).
\]
Then for any $t\in\mathbb{R}_{+}$ we can define a \emph{generalized quantum
stochastic (QS) integral}
\begin{equation}
\boldsymbol{i}_{0}^{t}(\mathbf{D})=\int_{\mathbb{X}^{t}}\Lambda(\mathbf{D}%
,\mathrm{d}x),\;\;\;\;\;\Lambda(\mathbf{D},\boldsymbol{\bigtriangleup}%
)=\sum_{\mu,\nu}\Lambda_{\nu}^{\mu}(\mathrm{D}_{\nu}^{\mu}%
,\boldsymbol{\bigtriangleup})\label{2oneb}%
\end{equation}
introduced in {\cite{ParS86}} as the sum of four continuous operators $\Lambda_{\mu
}^{\nu}(\mathrm{D}_{\nu}^{\mu}):\mathit{G}_{+}\rightarrow\mathit{G}_{-}$
described as operator measures on $\mathfrak{F}_{\mathbb{X}}\ni
\boldsymbol{\bigtriangleup}$ for $\boldsymbol{\bigtriangleup}=\mathbb{X}^{t}$ with
values%
\begin{align}
\lbrack\Lambda_{-}^{+}(\mathrm{D}_{+}^{-},\boldsymbol{\bigtriangleup}%
)\chi](\vartheta) &  =\int_{\boldsymbol{\bigtriangleup}}[\mathrm{D}_{+}%
^{-}(x)\chi](\vartheta)\mathrm{d}x\quad\;\text{ \ \ ~(preservation)},\nonumber\\
\lbrack\Lambda_{\circ}^{+}(\mathrm{D}_{+}^{\circ},\boldsymbol{\bigtriangleup
})\chi](\vartheta) &  =\sum_{x\in\boldsymbol{\bigtriangleup}\cap\vartheta
}[\mathrm{D}_{+}^{\circ}(x)\chi](\vartheta\boldsymbol{\setminus}x)\quad\;\;\text{
\ \ (creation)},\nonumber\\
\lbrack\Lambda_{-}^{\circ}(\mathrm{D}_{\circ}^{-},\boldsymbol{\bigtriangleup
})\chi](\vartheta) &  =\int_{\boldsymbol{\bigtriangleup}}[\mathrm{D}_{\circ
}^{-}(x)\mathring{\chi}(x)](\vartheta)\mathrm{d}x\;\;\;\;\text{ (annihilation)}%
,\nonumber\\
\lbrack\Lambda_{\circ}^{\circ}(\mathrm{D}_{\circ}^{\circ}%
,\boldsymbol{\bigtriangleup})\chi](\vartheta) &  =\sum_{x\in
\boldsymbol{\bigtriangleup}\cap\vartheta}[\mathrm{D}_{\circ}^{\circ
}(x)\mathring{\chi}(x)](\vartheta\boldsymbol{\setminus}x)\;\;\;\text{
(exchange)}.\label{2onec}%
\end{align}
Here $\chi\in\mathit{G}_{+},\vartheta\boldsymbol{\setminus}x=\{x^{\prime
}\in\vartheta:x^{\prime}\neq x\}$ denotes the chain $\vartheta\in\mathcal{X}$
from which the point $x\in\vartheta$ has been eliminated, and $\mathring{\chi
}\left(  x\right)  \in\mathfrak{k}_{x}\otimes\mathit{G}_{+}$ is the
\emph{single point split} $\mathring{\chi}\left(  x\right)  =\nabla_{x}\chi$,
or point derivative, defined for each $\chi\in\mathit{G}_{+}$ almost everywhere
(namely, for $\vartheta\in\mathcal{X}$: $x\notin\vartheta$) as the function%
\[
\lbrack\nabla_{x}\chi](\vartheta)=\chi(\vartheta\sqcup x)\equiv\mathring{\chi
}(x,\vartheta),
\]
where the operation $\varkappa\sqcup x$ denotes the disjoint union
$\vartheta=\varkappa\cup x$, $\varkappa\cap x=\emptyset$ of chains
$\varkappa\in\mathcal{X}$ and $x\in \mathbb{X}\boldsymbol{\setminus}\varkappa$
with pairwise comparable elements. Note that the point splitter $\nabla
$ represents the Malliavin derivative {\cite{Mal78}} densely defined in
Fock-Guichardet space as the bosonic annihilation operator$\;b\left(
x\right)  :\mathit{G}_{+}\rightarrow\mathfrak{k}_{x}\otimes\mathit{G}_{+}$ by
$\left[  b(x)\chi\right]  (\vartheta)=\mathring{\chi}(x,\vartheta)$ where one
can take $\mathring{\chi}(x,\vartheta)=0$ if $x\in\vartheta$, and its
inverse operator at $x$ is $\left[  \nabla_{x}^{\ast}\psi\right]  \left(
\vartheta\right)  =\psi\left(  x,\vartheta\boldsymbol{\setminus}x\right)
$ with $\left[  \nabla_{x}^{\ast}\psi\right]  \left(  \vartheta\right)  =0$
if $x\notin\vartheta$ defines in this representation the Skorokhod non-adapted
integral as the creation point integral
\[
\left[  \nabla^{\ast}\psi\right]  \left(  \vartheta\right)
=\sum_{x\in\vartheta}\psi(x,\vartheta\boldsymbol{\setminus}x)
\]
for any $\psi\in\mathit{K}_{-}\otimes\mathit{G}_{-}$. The continuity of this
derivative as the projective limit map $\mathit{G}_{+}\rightarrow
\mathit{K}_{+}\otimes\mathit{G}_{+}$ and the point integral as the adjoint map
$\mathit{K}_{-}\otimes\mathit{G}_{-}\rightarrow\mathit{G}_{-}$ will simply
follow from the isometricity of the multiple point splitter and co-isometricity
of the adjoint multiple point integral as defined below, originally established in {\cite{Be91}}.

\section{Split Operator and its Properties}

As it is proved below, we can consider the multiple bosonic annihilation operators $b^{\otimes
}(\upsilon):\chi\mapsto\mathring{\chi}(\upsilon)$ eliminating several points
$\upsilon$ in $\mathcal{X}_n$, with $\left[  b^{\otimes}(\upsilon)\chi\right]
(\vartheta)=0$ if $\upsilon\subseteq\vartheta$, as partial isometries on the projective limit $G_+$ into $G_+^{(n)}=K_+^{\otimes n}\otimes G_+$. They are described for each
$\upsilon=\{x_{1},\ldots x_{n}\}$ in terms of the $n$\emph{-point split}%
\begin{equation}
\left[  \Delta_{\upsilon}^{\left(  n\right)  }\chi\right]  (\vartheta
):=\chi(\vartheta\sqcup\upsilon)\equiv\chi^{\left(  n\right)  }(\upsilon
,\vartheta),\;\upsilon\in\mathcal{X}_{n},\label{2oneda}%
\end{equation}
where we denoted $\mathring{\chi}\left(  \upsilon\right)  =\chi^{\left(
n\right)  }\left(  \upsilon\right)  $ for $n=\left\vert \upsilon\right\vert $.
It is defined almost everywhere ($\vartheta\cap\upsilon=\emptyset$) on
$\vartheta\in\mathcal{X}$ as the $n$-th order $\Delta_{\upsilon_{n}}%
=\otimes_{x\in\upsilon_{n}}\nabla_{x}$ point (or Malliavin) derivative
{\cite{Mal78}} such that $\Delta^{\left(  0\right)  }=\mathrm{I}$ and
$\Delta^{\left(  1\right)  }=\nabla$. These $n$-tuple annihilations, densely defined
as operators from $\mathit{G}_{+}$ into $\mathfrak{k}^{\otimes}\left(
\upsilon\right)  \otimes\mathit{G}_{+}$, are not continuous for each
$\upsilon\in\mathcal{X}_{n}$ (except $\upsilon=\emptyset$ corresponding to
$n=0$ for which $b^{\otimes}\left(  \emptyset\right)  =\mathrm{I}$), but they
define projective-continuous linear maps into the space $\mathit{G}%
_{+}^{\left(  n\right)  }$ of
functions
$\upsilon\mapsto\psi\left(  \upsilon\right)  $ on $\mathcal{X}%
_{n}\subset\mathcal{X}$ for each $n\in\mathbb{N}$, and therefore have the adjoints $G_+^{(n)}\rightarrow G_+$. This follows from the projective contractivity of the maps
$\Delta^{\left(  n\right)  }:$ $\mathit{G}_{+}\rightarrow\mathit{G}%
_{+}^{\left(  n\right)  }$, and their adjoints, such that each function
$\chi^{\left(  n\right)  }=\Delta^{\left(  n\right)  }\chi$ is
square-integrable on $\mathcal{X}_{n}$ with any $q_{0}\in\mathfrak{p}_{0}$
being a component of the isometric operator%
\[
\Delta\chi=\int_{\mathcal{X}}^{\oplus}\Delta_{\upsilon}\chi\mathrm{d}%
\upsilon=\oplus_{n=0}^{\infty}\Delta^{\left(  n\right)  }\chi\equiv
\mathring{\chi}.%
\]
The projective isometricity of the linear operator $\Delta=\int_{\mathcal{X}%
}^{\oplus}\Delta_{\upsilon}\mathrm{d}\upsilon$, called the \emph{multiple point splitter},
\[
\Delta:\mathit{G}_{\star}\left(  q_{0}+q_{1}\right)  \rightarrow\mathit{F}%
_{\star}\left(  q_{0}\right)  \otimes\mathit{G}_{\star}(q_{1})
\]
is established in the following lemma.

\begin{lemma}
The linear map $\Delta:\chi\mapsto\left[  \Delta^{\left(  n\right)  }%
\chi\right]  $ defined as $\Delta\chi=\oplus_{n=0}^{\infty}\Delta^{\left(
n\right)  }\chi$ in (\ref{2oneda}) for all $\upsilon\in\mathcal{X}$ is a
projective isometry on Hilbert scale $\left\{  \mathit{G}_{\star}%
(q):q\in\mathfrak{p}\right\}  $ into the scale $\left\{  \mathit{F}_{\star
}\left(  q_{0}\right)  \otimes\mathit{G}_{\star}(q_{1}):q_{0}\in\mathfrak{p}%
_{0},q_{1}\in\mathfrak{p}_{1}\right\}  $ such that%
\[
\Vert\Delta\chi\Vert\left(  q_{0},q_{1}\right)  =\Vert\chi\Vert\left(
q_{0}+q_{1}\right)  .
\]
The adjoint co-isometric operator $\left\langle \Delta^{\ast}\psi
|\chi\right\rangle =\left\langle \psi|\Delta\chi\right\rangle $, defined on
$\psi\in\mathit{F}_{\star}\left(  q_{0}\right)  \otimes\mathit{G}_{\star}%
(q_{1})$ as the multiple point integral $\Delta^{\ast}=\sum\Delta
_{n}^{\ast}$, is a contraction from $\mathit{F}_{\star}(q_{0}^{-1}%
)\otimes\mathit{G}_{\star}\left(  q_{1}^{-1}\right)  $ into any $\mathit{G}%
_{\star}\left(  q^{-1}\right)  $ with $q\geq q_{0}+q_{1}$ such that $\left(
\Delta_{n}^{\ast}\right)  ^{\ast}=\Delta^{\left(  n\right)  }$. In
particular,%
\begin{equation}
\lbrack\Delta_{n}^{\ast}\psi](\vartheta)=\sum_{\mathcal{X}_{n}\ni
\upsilon\subseteq\vartheta}\psi(\upsilon,\vartheta\boldsymbol{\setminus
}\upsilon),\;\;\;\vartheta\in\mathcal{X}\label{2onedb}%
\end{equation}
defines for $\psi\left(  \upsilon,\varkappa\right)  =\psi_{n}\left(
\upsilon\right)  \otimes\chi\left(  \varkappa\right)  $ the $n$-th order
Skorokhod integral%
\[
\left[  S_{n}\left(  \psi_{n}\right)  \chi\right]  \left(  \vartheta\right)
=\sum_{\mathcal{X}_{n}\ni\upsilon\subseteq\vartheta}\psi_{n}(\upsilon
)\otimes\chi(\vartheta\boldsymbol{\setminus}\upsilon)=\left[\Delta_{n}%
^{\ast}\left(  \psi_{n}\otimes\chi\right)\right](\vartheta)
\]
of $\psi_{n}\in\mathit{K}_{\star}^{(n)}\left(  q_{0}%
^{-1}\right)  $ on $\chi\in\mathit{G}_{\star}\left(  q_{1}^{-1}\right)  $.
\end{lemma}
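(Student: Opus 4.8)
The plan is to reduce every assertion to two elementary identities on the chain space $\mathcal{X}$ and then obtain the isometry, the kernel of the adjoint, and the contractivity by direct substitution. Throughout I work in the Guichardet picture, so a vector $\chi\in\mathit{G}_{\star}(q)$ is a function $\varkappa\mapsto\chi(\varkappa)$ on $\mathcal{X}$ and, by (\ref{2oneda}), $\Delta\chi$ is the function $(\upsilon,\vartheta)\mapsto\chi(\upsilon\sqcup\vartheta)$ on pairs of (a.e.\ disjoint) chains. The first identity is the \emph{sum--integral identity}: for any $g\geq0$ on $\mathcal{X}\times\mathcal{X}$ that is separately symmetric in each chain argument,
\[
\int_{\mathcal{X}}\!\int_{\mathcal{X}}g(\upsilon,\vartheta)\,\mathrm{d}\upsilon\,\mathrm{d}\vartheta=\int_{\mathcal{X}}\Bigl(\,\sum_{\upsilon\subseteq\omega}g(\upsilon,\omega\boldsymbol{\setminus}\upsilon)\Bigr)\mathrm{d}\omega .
\]
I would prove it by decomposing both sides according to $|\upsilon|=k$ and $|\vartheta|=m$, replacing $\int_{\mathcal{X}_{n}}(\cdot)=\tfrac1{n!}\int_{\mathbb{X}^{n}}(\cdot)$ as established above, using that a chain of length $n=k+m$ has exactly $\binom{n}{k}$ subchains of length $k$ (with complementary subchain of length $m$) and $\tfrac1{n!}\binom{n}{k}=\tfrac1{k!\,m!}$, together with the separate symmetry of $g$; Tonelli legitimises the rearrangement for $g\geq0$, and Cauchy--Schwarz then transfers it to absolutely integrable signed or operator-valued kernels. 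The second identity is the \emph{product formula} $\sum_{\upsilon\subseteq\omega}q_{0}(\upsilon)q_{1}(\omega\boldsymbol{\setminus}\upsilon)=\prod_{x\in\omega}\bigl(q_{0}(x)+q_{1}(x)\bigr)=(q_{0}+q_{1})(\omega)$, which is merely the expansion of the product over the finite set $\omega$.

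Granting these, the isometry follows at once: the value of $\Delta\chi$ at $(\upsilon,\vartheta)$ lies in $\mathfrak{h}\otimes\mathit{K}_{\star}^{\otimes}(\upsilon)\otimes\mathit{K}_{\star}^{\otimes}(\vartheta)$, which is the value space of $\mathit{F}_{\star}(q_{0})\otimes\mathit{G}_{\star}(q_{1})$, and taking $g(\upsilon,\vartheta)=\Vert\chi(\upsilon\sqcup\vartheta)\Vert^{2}q_{0}(\upsilon)q_{1}(\vartheta)$ (separately symmetric since $\chi$ is symmetric and the weights are products) in the sum--integral identity, followed by the product formula, gives
\[
\Vert\Delta\chi\Vert(q_{0},q_{1})^{2}=\int_{\mathcal{X}}\!\int_{\mathcal{X}}\Vert\chi(\upsilon\sqcup\vartheta)\Vert^{2}q_{0}(\upsilon)q_{1}(\vartheta)\,\mathrm{d}\upsilon\,\mathrm{d}\vartheta=\int_{\mathcal{X}}\Vert\chi(\omega)\Vert^{2}(q_{0}+q_{1})(\omega)\,\mathrm{d}\omega=\Vert\chi\Vert(q_{0}+q_{1})^{2}.
\]
Because $\mathfrak{p}_{0}$ is an additive semigroup with $0\in\mathfrak{p}_{0}$ and $\mathfrak{p}_{1}=1+\mathfrak{p}_{0}$, one has $q_{0}+q_{1}\in\mathfrak{p}_{1}$ whenever $q_{0}\in\mathfrak{p}_{0}$ and $q_{1}\in\mathfrak{p}_{1}$, so $\Delta$ maps $\mathit{G}_{\star}(q_{0}+q_{1})$ isometrically into $\mathit{F}_{\star}(q_{0})\otimes\mathit{G}_{\star}(q_{1})$; since the identity holds in $[0,\infty]$, it also gives square integrability of each component $\chi^{(n)}=\Delta^{(n)}\chi$ on $\mathcal{X}_{n}$ with any $q_{0}\in\mathfrak{p}_{0}$, and passing to projective limits (using directedness of $\mathfrak{p}_{1}$) yields the projective isometry on the whole scale.

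For the adjoint I would pair an arbitrary $\psi\in\mathit{F}_{\star}(q_{0})\otimes\mathit{G}_{\star}(q_{1})$ against $\Delta\chi$ and run the sum--integral identity backwards,
\[
\langle\psi\mid\Delta\chi\rangle=\int_{\mathcal{X}}\!\int_{\mathcal{X}}\langle\psi(\upsilon,\vartheta)\mid\chi(\upsilon\sqcup\vartheta)\rangle\,\mathrm{d}\upsilon\,\mathrm{d}\vartheta=\int_{\mathcal{X}}\Bigl\langle\,\sum_{\upsilon\subseteq\omega}\psi(\upsilon,\omega\boldsymbol{\setminus}\upsilon)\,\Big|\,\chi(\omega)\Bigr\rangle\mathrm{d}\omega ,
\]
which identifies $[\Delta^{\ast}\psi](\omega)=\sum_{\upsilon\subseteq\omega}\psi(\upsilon,\omega\boldsymbol{\setminus}\upsilon)$; splitting the inner sum by $|\upsilon|=n$ gives precisely the summands (\ref{2onedb}), restricting the pairing to the sectors $|\upsilon|=n$ gives $(\Delta_{n}^{\ast})^{\ast}=\Delta^{(n)}$, and since the adjoint of an isometry is a co-isometry, $\Delta^{\ast}$ is a contraction from $\mathit{F}_{\star}(q_{0}^{-1})\otimes\mathit{G}_{\star}(q_{1}^{-1})$ into $\mathit{G}_{\star}((q_{0}+q_{1})^{-1})$; monotonicity of $\Vert\cdot\Vert(q)$ in $q$ then yields, for $q\geq q_{0}+q_{1}$, $\Vert\Delta^{\ast}\psi\Vert(q^{-1})\leq\Vert\Delta^{\ast}\psi\Vert((q_{0}+q_{1})^{-1})\leq\Vert\psi\Vert(q_{0}^{-1},q_{1}^{-1})$, i.e.\ the asserted contraction into any such $\mathit{G}_{\star}(q^{-1})$. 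Finally, substituting the factorised $\psi(\upsilon,\varkappa)=\psi_{n}(\upsilon)\otimes\chi(\varkappa)$ with $\psi_{n}$ supported on $\mathcal{X}_{n}$ into the formula for $\Delta_{n}^{\ast}$ reproduces $[\Delta_{n}^{\ast}(\psi_{n}\otimes\chi)](\vartheta)=\sum_{\mathcal{X}_{n}\ni\upsilon\subseteq\vartheta}\psi_{n}(\upsilon)\otimes\chi(\vartheta\boldsymbol{\setminus}\upsilon)=[S_{n}(\psi_{n})\chi](\vartheta)$.

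The one real obstacle is the sum--integral identity: the combinatorial bookkeeping that converts a double Guichardet integral into an integral of a finite subset-sum, and the care needed to interchange the infinite direct sum over sectors with the integrals (Tonelli for the norm identity, absolute convergence via Cauchy--Schwarz for the pairing). It is prudent to carry out the manipulations first on a dense domain where every integral manifestly converges --- vectors $\chi$ vanishing outside finitely many sectors $\mathcal{X}_{n}$, or exponential vectors in $\mathit{G}_{+}$ --- and then extend by the very isometry being proved; with the two identities in hand, the isometry, the kernel of $\Delta^{\ast}$, the contractivity, and the identification with the $n$-th order Skorokhod integral are all routine.
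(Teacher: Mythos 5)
Your proposal is correct and follows the same architecture as the paper's proof: everything is reduced to the sum--integral identity
\[
\int\sum_{\upsilon\subseteq\vartheta}f(\upsilon,\vartheta\boldsymbol{\setminus}\upsilon)\,\mathrm{d}\vartheta=\iint f(\upsilon,\varkappa)\,\mathrm{d}\upsilon\,\mathrm{d}\varkappa,
\]
from which the isometry (via the binomial expansion $\sum_{\upsilon\subseteq\vartheta}q_{0}(\upsilon)q_{1}(\vartheta\boldsymbol{\setminus}\upsilon)=(q_{0}+q_{1})(\vartheta)$), the kernel formula for $\Delta^{\ast}$, the co-isometry/contraction property, and the Skorokhod identification all follow by substitution, exactly as in the paper. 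The one genuine point of divergence is how you prove that identity. The paper verifies it on multiplicative (exponential-type) functions $f(\upsilon,\varkappa)=\prod_{x\in\upsilon}g(x)\prod_{x\in\varkappa}h(x)$, where both sides collapse to $\exp\{\int(g+h)\,\mathrm{d}x\}$, and then invokes density of (the linear span of) such products in $\mathit{L}^{1}(\mathcal{X}\times\mathcal{X})$. You instead decompose by sectors $|\upsilon|=k$, $|\vartheta|=m$, pass to the symmetric extensions via $\int_{\mathcal{X}_{n}}=\frac{1}{n!}\int_{\mathbb{X}^{n}}$, and match the $\binom{n}{k}$ subchains against $\frac{1}{n!}\binom{n}{k}=\frac{1}{k!\,m!}$, with Tonelli justifying the rearrangement for nonnegative integrands and Cauchy--Schwarz handling the signed pairing. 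Your route is more elementary and self-contained --- it makes the measure-theoretic justification explicit and avoids having to argue density of product functions --- at the cost of some combinatorial bookkeeping; the paper's route is shorter and exploits the exponential structure of the Guichardet integral, which is also the mechanism behind the later uses of the identity in the multiple-integral estimates. Both are sound, and your closing remark about first working on finitely supported sectors and extending by the isometry is the right way to make the interchange of sum and integral airtight.
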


\begin{proof}
We first of all establish the principal formula of the multiple integration
\begin{equation}
\int\sum_{\upsilon\subseteq\vartheta}f(\upsilon,\vartheta
\boldsymbol{\setminus}\upsilon)\mathrm{d}\vartheta=\iint f(\upsilon
,\varkappa)\mathrm{d}\upsilon\mathrm{d}\varkappa,\quad\forall f\in
\mathit{L}^{1}(\mathcal{X}\times\mathcal{X}),\label{2oned}%
\end{equation}
which will allow us to define the adjoint operator $\Delta^{\ast}$. Let
$f(\upsilon,\varkappa)=g(\upsilon)h(\varkappa)$ be the product of integrable
complex functions on $\mathcal{X}$ of the form $g(\upsilon)=\prod
_{x\in\upsilon}g(x)$, $h(\varkappa)=\prod_{x\in\varkappa}h(x)$ for any
$\upsilon$, $\varkappa\in\mathcal{X}$. Employing the binomial formula
\[
\sum_{\upsilon\subseteq\vartheta}g(\upsilon)h(\vartheta
\boldsymbol{\setminus}\upsilon)=\sum_{\upsilon\sqcup\varkappa=\vartheta
}\prod_{x\in\upsilon}g(x)\prod_{x\in\varkappa}h(x)=\prod_{x\in\vartheta
}(g(x)+h(x)),
\]
and also the equality $\int f(\upsilon)\mathrm{d}\upsilon=\exp\{\int
f(x)\mathrm{d}x\}$ for $f(\upsilon)=\prod_{x\in\upsilon}f(x)$, we obtain the
formula
\[
\int\sum_{\upsilon\subseteq\vartheta}g(\upsilon)h(\vartheta
\boldsymbol{\setminus}\upsilon)\mathrm{d}\vartheta=\exp\left\{
\int(g(x)+h(x))\mathrm{d}x\right\}  =\iint g(\upsilon)h(\varkappa
)\mathrm{d}\upsilon\mathrm{d}\varkappa,
\]
which proves~(\ref{2oned}) on a set of product-functions $f$ dense in
$\mathit{L}^{1}(\mathcal{X}\times\mathcal{X})$.

Applying this formula to the scalar product $\left\langle \psi(\upsilon
,\varkappa)|\psi(\upsilon,\varkappa)\right\rangle \in\mathit{L}^{1}%
(\mathcal{X}\times\mathcal{X})$, we obtain
\[
\int\sum_{\upsilon\subseteq\vartheta}\left\langle \psi(\upsilon,\vartheta
\boldsymbol{\setminus}\upsilon)\mid\chi(\vartheta)\right\rangle
\mathrm{d}\vartheta=\iint\left\langle \psi(\upsilon,\varkappa)\mid
\chi(\upsilon\sqcup\varkappa)\right\rangle \mathrm{d}\upsilon\mathrm{d}%
\varkappa,
\]
that is, $\left\langle \Delta^{\ast}\psi\mid\chi\right\rangle =\left\langle
\psi\mid\Delta\chi\right\rangle $, where $[\Delta\chi](\upsilon,\varkappa
)=\chi(\varkappa\sqcup\upsilon)\equiv\mathring{\chi}(\upsilon,\varkappa)$.
Choosing arbitrary $\psi\in\mathit{F}_{\star}(q_{0}^{-1})\otimes\mathit{G}%
_{\star}\left(  q_{1}^{-1}\right)  $, we find that the annihilation operators
$b(\upsilon)\chi=[\Delta_{\upsilon}\chi] $ define the isometry $\Delta
:\mathit{G}_{\star}\left(  q_{0}+q_{1}\right)  \rightarrow\mathit{F}_{\star
}\left(  q_{0}\right)  \otimes\mathit{G}_{\star}(q_{1})$ with the operator
$\Delta^{\ast}$ defined as co-isometry $\mathit{F}_{\star}(q_{0}^{-1}%
)\otimes\mathit{G}_{\star}\left(  q_{1}^{-1}\right)  \rightarrow\mathit{G}%
_{\star}\left(  q^{-1}\right)  $ for$\,q=q_{0}+q_{1}$ with respect to the
standard pairing of dual spaces $\mathit{G}_{\star}(q)$ and $\mathit{G}_{\star
}\left(  q^{-1}\right)  $:
\begin{align*}
&  \left\Vert \mathring{\chi}\right\Vert ^{2}\left(  q_{0},q_{1}\right)
=\iint\Vert\mathring{\chi}\left(  \upsilon,\varkappa\right)  \Vert^{2}%
q_{0}(\upsilon)q_{1}(\varkappa)\mathrm{d}\upsilon\mathrm{d}\varkappa\\
&  =\int\sum_{\upsilon\subseteq\vartheta}\Vert\chi(\vartheta)\Vert^{2}%
q_{0}(\upsilon)q_{1}(\vartheta\boldsymbol{\setminus}\upsilon
)\mathrm{d}\vartheta=\int\Vert\chi(\vartheta)\Vert^{2}\sum_{\upsilon
\sqcup\varkappa=\vartheta}q_{0}(\upsilon)q_{1}(\varkappa)\mathrm{d}\vartheta\\
&  =\int\Vert\chi(\vartheta)\Vert^{2}(q_{0}+q_{1})(\vartheta)\mathrm{d}%
\vartheta\equiv\left\Vert \chi\right\Vert^2 \left(  q_{0}+q_{1}\right) \leq\|\chi\|^2(q)\;\;\forall\;q\geq q_0+q_1.
\end{align*}
Hence it follows that $\Delta$ is projective continuous operator from
$\mathit{G}_{+}$ to $\mathit{F}_{+}\otimes\mathit{G}_{+}$, where
$\mathit{F}_{+}=\bigcap_{q\in\mathfrak{p}_{0}}\mathit{F}_{\star}(q)$, and in
particular so is the one-point split $\chi(x,\varkappa)=\chi(x\sqcup
\varkappa)\equiv\mathring{\chi}\left(  x,\varkappa\right)  $ from
$\mathit{G}_{+}$ to $\mathit{K}_{+}\otimes\mathit{G}_{+}$, as a contracting map
$\mathit{G}_{\star}\left(  q_{0}+q_{1}\right)  \rightarrow\mathit{F}_{\star
}\left(  q_{0}\right)  \otimes\mathit{G}_{\star}(q_{1})$ for all $q_{0}%
\in\mathfrak{p}_{0},q_{1}\in\mathfrak{p}$. The lemma is proved.
\end{proof}
\begin{xrem}
Because the explicit form of both the creation and annihilation operators' norms may not be obvious we shall review them here for the reader's familiarization.
\[
\|\Delta\|^{}_{q_0+q_1}=\sup_\chi\frac{\|\Delta\chi\|(q_0,q_1)}{\|\chi\|(q_0+q_1)}= \sup_{\chi,\psi}\frac{|\langle\psi|\Delta\chi\rangle|}{\|\psi\|(q_0^{-1},q_1^{-1})\|\chi\|(q_0+q_1)}
\]
and
\[
\|\Delta^\ast\|^{q_0,q_1}_{}=\sup_\psi\frac{\|\Delta^\ast\psi\|((q_0+q_1)^{-1})}{\|\psi\|(q_0^{-1},q_1^{-1})}= \sup_{\chi,\psi}\frac{|\langle\Delta^\ast\psi|\chi\rangle|}{\|\psi\|(q_0^{-1},q_1^{-1})\|\chi\|(q_0+q_1)}
\]
where $\psi\in F_\star(q_0^{-1})\otimes G_\star(q_1^{-1})$ and $\chi\in G_\star(q_0+q_1)$, indeed $\Delta^\ast:F_\star(q_0^{-1})\otimes G_\star(q_1^{-1})\rightarrow G_\star((q_0+q_1)^{-1})$ and $\Delta:G_\star(q_0+q_1)\rightarrow F_\star(q_0)\otimes G_\star(q_1)$. By virtue of the fact that $\langle\psi|\Delta\chi\rangle=\langle\Delta^\ast\psi|\chi\rangle$ these two norms are equal, and they are equal to 1.
\end{xrem}

\section{Multiple QS Integrals and Their Continuity}

We are now ready to prove the \emph{inductive continuity} of the integral
(\ref{2oneb}) with respect to $\mathbf{D}=\left[  \mathrm{D}_{\nu}^{\mu
}\right]  $ by showing the inequality
\[
\Vert(\boldsymbol{i}_{0}^{t}(\mathbf{D})\chi)\Vert\left(  p^{-1}\right)
\leq\Vert\mathrm{D}\Vert_{q,t}^{s}(r)\Vert\chi\Vert(p),\quad\forall p\geq
r^{-1}+q+s^{-1},
\]
where $\Vert\mathrm{D}\Vert_{q,t}^{s}(r)=\Vert\mathrm{D}_{+}^{-}\Vert
_{q,t}^{(1)}+\Vert\mathrm{D}_{+}^{\circ}\Vert_{q,t}^{(2)}(r)+\Vert
\mathrm{D}_{\circ}^{-}\Vert_{q,t}^{(2)}(r)+\Vert\mathrm{D}_{\circ}^{\circ
}\Vert_{q,t}^{(\infty)}(s)$. We will establish this inequality as the
single-integral case of the corresponding inequality for the
\emph{\ generalized multiple QS integral} {\cite{Be91}}
\begin{equation}
\lbrack\boldsymbol{\imath}_{0}^{t}(\mathrm{M})\chi](\vartheta)=\sum
_{\upsilon_{\circ}^{\circ}\sqcup\upsilon_{+}^{\circ}\subseteq\vartheta^{t}%
}\int_{\mathcal{X}^{t}}\int_{\mathcal{X}^{t}}[\mathrm{M}(\boldsymbol{\upsilon
})\mathring{\chi}(\upsilon_{\circ}^{-}\sqcup\upsilon_{\circ}^{\circ
})](\upsilon_{-}^{\circ})\mathrm{d}\upsilon_{+}^{-}\mathrm{d}\upsilon_{\circ
}^{-}\label{2onee}%
\end{equation}
where $\vartheta^{t}=\vartheta\cap \mathbb{X}^{t},\mathcal{X}^{t}=\{\vartheta
\in\mathcal{X}:\vartheta\subset \mathbb{X}^{t}\}$ and the sum is taken over all
decompositions $\vartheta=\upsilon_{-}^{\circ}\sqcup\upsilon_{\circ}^{\circ
}\sqcup\upsilon_{+}^{\circ}$ such that $\upsilon_{\circ}^{\circ}\in
\mathcal{X}^{t}$ and $\upsilon_{+}^{\circ}\in\mathcal{X}^{t} $. The
\emph{multi-integrand }$\mathrm{M}\left(  \boldsymbol{\upsilon}\right)  $
defines the values%
\[
\mathrm{M}\left(  \upsilon_{\cdot}^{\cdot}\right)  =\delta_{\emptyset}\left(
\upsilon_{\cdot}^{+}\right)  \mathrm{M}\left(  \boldsymbol{\upsilon}\right)
\delta_{\emptyset}\left(  \upsilon_{-}^{\cdot}\right)
,\;\;\boldsymbol{\upsilon}=\left(
\begin{array}
[c]{cc}%
\upsilon_{+}^{-} & \upsilon_{\circ}^{-}\\
\upsilon_{+}^{\circ} & \upsilon_{\circ}^{\circ}%
\end{array}
\right)
\]
of matrix elements $\mathrm{M}^\cdot_\cdot(\upsilon)=\mathrm{M}(\upsilon^\cdot_\cdot)$, $\upsilon=\sqcup_{\mu,\nu}\upsilon^\mu_\nu$, for a decomposable triangular tensor-operator $\mathbf{M}\equiv\mathrm{M}^\cdot_\cdot $ defined on
matrices $\upsilon_{\cdot}^{\cdot}=\left[  \upsilon_{\nu}^{\mu}\right]
_{\nu=-,\circ,+}^{\mu=-,\circ,+}$ whose elements are finite disjoint chains
$\upsilon_{\nu}^{\mu}$, with $\upsilon^\mu_\nu=\emptyset$ for $\mu>\nu$ and with $\mathrm{M}(\upsilon^\cdot_\cdot)=0$ if $\upsilon^+_\cdot\neq\emptyset$ or $\upsilon^\cdot_-\neq\emptyset$. The other values of $\mathrm{M}\left(  \upsilon_{\cdot}^{\cdot
}\right)$ are defined by a kernel-operator function $\mathrm{M}\left(  \boldsymbol{\upsilon}\right)  $ of the quadruple $\boldsymbol{\upsilon}%
=(\upsilon_{\nu}^{\mu})_{\nu=+,\circ}^{\mu=-,\circ}$ with
continuous operator values $\mathrm{M}\left(  \boldsymbol{\upsilon}\right)
:\mathit{G}_{+}\rightarrow\mathit{G}_{-}$ in the scalar case $\mathfrak{k}_x%
=\mathbb{C}$. In the general case it is defined almost everywhere by its
values on the chains $\upsilon_{\nu}^{\mu}\in\mathcal{X}$ in the continuous
operators
\[
\mathrm{M}\left(
\begin{array}
[c]{cc}%
\upsilon_{+}^{-} & \upsilon_{\circ}^{-}\\
\upsilon_{+}^{\circ} & \upsilon_{\circ}^{\circ}%
\end{array}
\right)  :\mathfrak{k}_{\upsilon_{\circ}^{-}}^{\otimes}\otimes\mathfrak{k}%
_{\upsilon_{\circ}^{\circ}}^{\otimes}\otimes\mathit{G}_{+}\rightarrow
\mathfrak{k}_{\upsilon_{\circ}^{\circ}}^{\otimes}\otimes\mathfrak{k}%
_{\upsilon_{+}^{\circ}}^{\otimes}\otimes\mathit{G}_{-}.
\]
We will assume that these operators are bounded from $\mathit{G}_{\star}(q)$ to
$\mathit{G}_{\star}\left(  q^{-1}\right)  $ for some $q\in\mathfrak{p}_{1}$, such that $\|{\mathrm{I}}\|_q=1$, and
that there exist strictly positive functions $r>0$, $r^{-1}\in\mathfrak{p}%
_{0}$, and $s>0$, $s^{-1}\in\mathfrak{p}_{0}$ such that
\begin{equation}
\Vert\mathrm{M}\Vert_{q,t}^{s}(r)=\int_{\mathcal{X}^{t}}\Vert\mathrm{M}%
_{+}^{-}(\upsilon)\Vert_{q,t}^{s}\left(  r\right)  \mathrm{d}\upsilon
<\infty,\quad\forall t<\infty,\label{2onef}%
\end{equation}
where
\[
\Vert\mathrm{M}_{+}^{-}(\upsilon_{+}^{-})\Vert_{q,t}^{s}(r)=\left(
\int_{\mathcal{X}^{t}}\int_{\mathcal{X}^{t}}\mathrm{ess}\sup_{\upsilon_{\circ
}^{\circ}\in\mathcal{X}^{t}}(s(\upsilon_{\circ}^{\circ})\Vert\mathrm{M}%
(\boldsymbol{\upsilon})\Vert_{q})^{2}r(\upsilon_{+}^{\circ}\sqcup
\upsilon_{\circ}^{-})\mathrm{d}\upsilon_{+}^{\circ}\mathrm{d}\upsilon_{\circ
}^{-}\right)  ^{\frac{1}{2}},
\]
and $s(\upsilon)=\prod_{x\in\upsilon}s(x)$, $r(\upsilon)=\prod_{x\in\upsilon
}r(x)$.

We mention that the single integral~(\ref{2oneb}) corresponds to the
case
\[
\mathrm{M}(\upsilon_{\cdot}^{\cdot})=0,\quad\forall\upsilon_{\cdot}^{\cdot
}:\sum_{\mu\neq+,\nu\neq-}|\upsilon_{\nu}^{\mu}|\neq1,
\]
and $\mathrm{M}(\mathbf{x}_{\nu}^{\mu})=\mathrm{D}_{\nu}^{\mu}(x)$ otherwise, where $\mathbf{x}_{\nu}^{\mu}$ denotes one of six `atomic' triangular matrices
$\upsilon_{\cdot}^{\cdot}(\mathbf{x})=\left[  \upsilon_{\lambda}^{\kappa}(\mathbf{x})\right]
_{\lambda=-,\circ,+}^{\kappa=-,\circ,+}\equiv\mathbf{x}$ having all matrix elements
$\upsilon_{\lambda}^{\kappa}(\mathbf{x})$ empty if $\mathbf{x}\neq\mathbf{x}^\kappa_\lambda$, but $\upsilon_{\nu}^{\mu}(\mathbf{x})=x$ for $\mathbf{x}=\mathbf{x}^\mu_\nu$. Note that
integrand $\mathrm{M}(\mathbf{x}_{\nu}^{\mu})$ is zero on the atomic matrices
$\mathbf{x}_{\nu}^{+}$ and $\mathbf{x}_{-}^{\mu}$, otherwise $\mathrm{M}%
(\mathbf{x})=\mathrm{M}(\boldsymbol{x}) $, given by the single-point kernel $\mathrm{M}(\boldsymbol{x})$ as a function of one of the four single-point tables
$\boldsymbol{\upsilon}\left(  x\right)= \left(\upsilon^\kappa_\lambda(\mathbf{x})\right)^{\kappa=-,\circ}_{\lambda=+,\circ} \equiv\boldsymbol{x}$:
\begin{equation}
\boldsymbol{x}_{+}^{-}=%
\begin{pmatrix}
x, & \emptyset\\
\emptyset & \emptyset
\end{pmatrix}
,\boldsymbol{x}_{+}^{\circ}=%
\begin{pmatrix}
\emptyset, & \emptyset\\
x & \emptyset
\end{pmatrix}
,\boldsymbol{x}_{\circ}^{-}=%
\begin{pmatrix}
\emptyset, & x\\
\emptyset & \emptyset
\end{pmatrix}
,\boldsymbol{x}_{\circ}^{\circ}=%
\begin{pmatrix}
\emptyset, & \emptyset\\
\emptyset & x
\end{pmatrix}
,\label{2oneg}%
\end{equation}
determined by an $x\in \mathbb{X}$. It follows from the next theorem that the function
$\mathrm{M}(\boldsymbol{\upsilon})$ in~(\ref{2onee}) can be defined up to
equivalence, whose kernel $\left\{  \mathrm{M}\approx0\right\}  $ consists of
all multiple integrands with $\Vert\mathrm{M}\Vert_{q,t}^{s}(r)=0$ for all
$t\in\mathbb{R}_{+}$ and for some $q,r,s$. In particular, $\mathrm{M}$ can be
defined almost everywhere only for the tables $\boldsymbol{\upsilon}%
=(\upsilon_{\nu}^{\mu})$ that give disjoint decompositions $\upsilon
=\sqcup_{\mu,\nu}\upsilon_{\nu}^{\mu}$ of the chains $\upsilon\in\mathcal{X}$,
that is, it may have nonzero values only on $\boldsymbol{\upsilon}$
representable in the form $\boldsymbol{\upsilon}=\sqcup_{x\in\upsilon
}\boldsymbol{x}$, where $\boldsymbol{x}$ is one of the atomic tables given
by~(\ref{2oneg}) with indices $\mu,\nu$ for $x\in\upsilon_{\nu}^{\mu}$.

\begin{theorem}
Suppose that $\mathrm{M}(\boldsymbol{\upsilon})$ is a locally
integrable function in the sense of \textup{(\ref{2onef})} for some $q,r,s>0$. Then its
integral \textup{(\ref{2onee})} is a continuous operator $\mathrm{T}%
_{t}=\boldsymbol{\imath}_{0}^{t}(\mathrm{M})$ from $\mathit{G}_{+}$ to
$\mathit{G}_{-}$ satisfying the estimate
\begin{equation}
\Vert\mathrm{T}_{t}\Vert_{p}=\sup_{\chi\in\mathit{G}_{\star}(p)}\left\{
\Vert\mathrm{T}_{t}\chi\Vert\left(  p^{-1}\right)  /\Vert\chi\Vert(p)\right\}
\leq\Vert\mathrm{M}\Vert_{q,t}^{s}(r)\label{2oneh}%
\end{equation}
for any $p\geq r^{-1}+q+s^{-1}$. The operator $\mathrm{T}_{t}^{\ast}$,
formally adjoint to $\mathrm{T}_{t}$ in $\mathcal{G}_{\ast}$, is the integral
\begin{equation}
\boldsymbol{\imath}_{0}^{t}(\mathrm{M})^{\ast}=\boldsymbol{\imath}_{0}%
^{t}(\mathrm{M}^{\ddagger}),\quad\mathrm{M}^{\ddagger}\left(  \upsilon_{\cdot
}^{\cdot}\right)  =\mathrm{M}\left(  \widetilde{\upsilon_{\cdot}^{\cdot}%
}\right)  ^{\ast},\;\widetilde{\left[  \upsilon_{\nu}^{\mu}\right]
}=\left[  \upsilon_{-\mu}^{-\nu}\right] \label{2onei}%
\end{equation}
of $\delta_{\emptyset}\left(  \upsilon_{\cdot}^{+}\right)  \mathrm{M}^{\star
}\left(  \boldsymbol{\upsilon}\right)  \delta_{\emptyset}\left(  \upsilon
_{-}^{\cdot}\right)  =\mathrm{M}^{\ddagger}\left(  \upsilon_{\cdot}^{\cdot
}\right)  $, which is continuous from $\mathit{G}_{+}$ to $\mathit{G}_{-}$,
and satisfying $\Vert\mathrm{M}^{\star}\Vert_{q,t}^{s}(r)=\Vert\mathrm{M}\Vert_{q,t}%
^{s}(r)$ for $\mathrm{M}^{\star}\left(  \boldsymbol{\upsilon}\right)
=\mathrm{M}\left(  \boldsymbol{\upsilon}^{\prime}\right)  ^{\ast} $, where
$\left(  \upsilon_{\nu}^{\mu}\right)  ^{\prime}=\left(  \upsilon_{-\mu}^{-\nu
}\right)  $. Moreover, the operator-valued function $t\mapsto\mathrm{T}_{t}$
has the quantum-stochastic differential $\mathrm{dT}_{t}=\mathrm{d}%
\boldsymbol{i}_{0}^{t}(\mathbf{D})$ in the sense that
\begin{equation}
\boldsymbol{\imath}_{0}^{t}(\mathrm{M})=\mathrm{M}(\emptyset)+\boldsymbol{i}%
_{0}^{t}(\mathbf{D}),\quad\mathrm{D}_{\nu}^{\mu}(x)=\boldsymbol{\imath}%
_{0}^{t(x)}(\mathrm{\dot{M}}(\mathbf{x}_{\nu}^{\mu})),\label{2onej}%
\end{equation}
defined by the quantum-stochastic derivatives $\mathbf{D}=\left[
\mathrm{D}_{\nu}^{\mu}\right]  $ with values \textup{(\ref{2onea})} acting
from $\mathit{G}_{\star}(p)$ to $\mathit{G}_{\star}\left(  p^{-1}\right)  $ and
bounded almost everywhere:
\[
\Vert\mathrm{D}_{+}^{-}\Vert_{p,t}^{(1)}\leq\Vert\mathrm{M}\Vert_{q,t}%
^{s}(r),\quad\Vert\mathrm{D}\Vert_{p,t}^{(2)}(r)\leq\Vert\mathrm{M}\Vert
_{q,t}^{s}(r),\quad\Vert\mathrm{D}_{\circ}^{\circ}\Vert_{p,t}^{(\infty
)}(s)\leq\Vert\mathrm{M}\Vert_{q,t}^{s}(r)
\]
for $\mathrm{D}=\mathrm{D}_{\circ}^{-}$ and $\mathrm{D}=\mathrm{D}_{+}^{\circ
}$, $p\geq r^{-1}+q+s^{-1}$. This differential is defined in the form of the
multiple integrals \textup{(\ref{2onee})}, with respect to
$\boldsymbol{\upsilon}$, of the point derivatives $\mathrm{\dot{M}}%
(\mathbf{x},\upsilon_{\cdot}^{\cdot})=\mathrm{M}(\upsilon_{\cdot}^{\cdot
}\sqcup\mathbf{x})$, where $\mathbf{x}$ is given by one of four atomic tables
\textup{(\ref{2oneg})} at a fixed point $x\in \mathbb{X}$.
\end{theorem}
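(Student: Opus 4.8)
The plan is to establish, in this order, the norm estimate \eqref{2oneh}, the adjoint formula \eqref{2onei}, and the stochastic differential \eqref{2onej}; the estimate is the substance and the other two follow by bookkeeping with the principal integration formula \eqref{2oned} and the splitter Lemma. For \eqref{2oneh} I would peel off the four families of points in \eqref{2onee} hierarchically, matching the three norms $L^{1}$ (preservation $\upsilon_{+}^{-}$), $L^{2}(r)$ (creation $\upsilon_{+}^{\circ}$ and annihilation $\upsilon_{\circ}^{-}$), $L^{\infty}(s)$ (exchange $\upsilon_{\circ}^{\circ}$). First, by Minkowski's inequality for the Bochner integral in $\upsilon_{+}^{-}$ it suffices, since $\int\Vert\mathrm{M}_{+}^{-}(\upsilon_{+}^{-})\Vert_{q,t}^{s}(r)\,\mathrm{d}\upsilon_{+}^{-}=\Vert\mathrm{M}\Vert_{q,t}^{s}(r)$, to bound for a.e.\ fixed $\upsilon_{+}^{-}$ the operator $\mathrm{T}_{t}(\upsilon_{+}^{-})$ built from creation, annihilation and exchange alone by $\Vert\mathrm{M}_{+}^{-}(\upsilon_{+}^{-})\Vert_{q,t}^{s}(r)\,\Vert\chi\Vert(p)$. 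Writing $\sum_{\upsilon_{\circ}^{\circ}\sqcup\upsilon_{+}^{\circ}\subseteq\vartheta^{t}}=\sum_{\upsilon_{\circ}^{\circ}\subseteq\vartheta^{t}}\sum_{\upsilon_{+}^{\circ}\subseteq\vartheta^{t}\setminus\upsilon_{\circ}^{\circ}}$ and $\mathring{\chi}(\upsilon_{\circ}^{-}\sqcup\upsilon_{\circ}^{\circ})=\Delta_{\upsilon_{\circ}^{-}}\bigl(\mathring{\chi}(\upsilon_{\circ}^{\circ})\bigr)$ by coassociativity of the point splitter, I read $\mathrm{T}_{t}(\upsilon_{+}^{-})\chi$ as $\Delta^{\ast}\psi$ with $\psi(\upsilon_{\circ}^{\circ},\varkappa)=[\mathrm{N}_{\upsilon_{\circ}^{\circ}}\mathring{\chi}(\upsilon_{\circ}^{\circ})](\varkappa)$, where $\mathrm{N}_{\upsilon_{\circ}^{\circ}}$ is again a multiple integral but carrying only creation and annihilation, the exchange chain $\upsilon_{\circ}^{\circ}$ frozen. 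The co-isometry $\Delta^{\ast}\colon\mathit{F}_{\star}(s)\otimes\mathit{G}_{\star}((r^{-1}+q)^{-1})\to\mathit{G}_{\star}(p^{-1})$ of the Lemma, which is contractive because $p\geq s^{-1}+(r^{-1}+q)$, reduces the required bound to controlling $\int s(\upsilon_{\circ}^{\circ})\,\Vert\mathrm{N}_{\upsilon_{\circ}^{\circ}}\mathring{\chi}(\upsilon_{\circ}^{\circ})\Vert^{2}\bigl((r^{-1}+q)^{-1}\bigr)\,\mathrm{d}\upsilon_{\circ}^{\circ}$.

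The same manoeuvre one level down treats $\mathrm{N}_{\upsilon_{\circ}^{\circ}}$: split off $\upsilon_{\circ}^{-}$ from $\mathring{\chi}(\upsilon_{\circ}^{\circ})$ by $\Delta\colon\mathit{G}_{\star}(r^{-1}+q)\to\mathit{F}_{\star}(r^{-1})\otimes\mathit{G}_{\star}(q)$; apply $\mathrm{M}(\boldsymbol{\upsilon})\colon\mathit{G}_{\star}(q)\to\mathit{G}_{\star}(q^{-1})$ of norm $\Vert\mathrm{M}(\boldsymbol{\upsilon})\Vert_{q}$; use Cauchy--Schwarz in $\upsilon_{\circ}^{-}$ to pair the weight $r^{-1}(\upsilon_{\circ}^{-})$ produced by the isometry against the $r(\upsilon_{\circ}^{-})$ in the norm; and re-insert the creation points by $\Delta^{\ast}\colon\mathit{F}_{\star}(r)\otimes\mathit{G}_{\star}(q^{-1})\to\mathit{G}_{\star}((r^{-1}+q)^{-1})$, which absorbs $r(\upsilon_{+}^{\circ})$. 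Since the essential supremum over $\upsilon_{\circ}^{\circ}$ in the definition of $\Vert\mathrm{M}_{+}^{-}\Vert_{q,t}^{s}(r)$ gives $\Vert\mathrm{M}(\boldsymbol{\upsilon})\Vert_{q}\leq s^{-1}(\upsilon_{\circ}^{\circ})\,\mathrm{ess\,sup}_{\upsilon_{\circ}^{\circ}}\bigl(s(\upsilon_{\circ}^{\circ})\Vert\mathrm{M}(\boldsymbol{\upsilon})\Vert_{q}\bigr)$, this yields $\Vert\mathrm{N}_{\upsilon_{\circ}^{\circ}}\mathring{\chi}(\upsilon_{\circ}^{\circ})\Vert\bigl((r^{-1}+q)^{-1}\bigr)\leq s^{-1}(\upsilon_{\circ}^{\circ})\,\Vert\mathrm{M}_{+}^{-}(\upsilon_{+}^{-})\Vert_{q,t}^{s}(r)\,\Vert\mathring{\chi}(\upsilon_{\circ}^{\circ})\Vert(r^{-1}+q)$. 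Feeding this back, the factors $s(\upsilon_{\circ}^{\circ})\cdot s^{-2}(\upsilon_{\circ}^{\circ})=s^{-1}(\upsilon_{\circ}^{\circ})$ combine with the isometry $\int s^{-1}(\upsilon_{\circ}^{\circ})\Vert\mathring{\chi}(\upsilon_{\circ}^{\circ})\Vert^{2}(r^{-1}+q)\,\mathrm{d}\upsilon_{\circ}^{\circ}=\Vert\chi\Vert^{2}(s^{-1}+r^{-1}+q)\leq\Vert\chi\Vert^{2}(p)$ to close the estimate for every $p\geq r^{-1}+q+s^{-1}$; the single-integral inequality preceding \eqref{2oneh} is the special case where $\mathrm{M}$ is supported on the one-point tables \eqref{2oneg}.

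For the adjoint \eqref{2onei} I would expand the form $\langle\psi\mid\boldsymbol{\imath}_{0}^{t}(\mathrm{M})\chi\rangle$ in the Guichardet representations of $\psi$ and $\chi$, use \eqref{2oned} to convert each sum over a decomposition of the argument chain into independent integrals, and apply Fubini to the integrated variables $\upsilon_{+}^{-},\upsilon_{\circ}^{-}$; re-collecting the result as an operator acting on $\psi$ interchanges the creation and annihilation chains and transposes the table, which is exactly $\widetilde{[\upsilon_{\nu}^{\mu}]}=[\upsilon_{-\mu}^{-\nu}]$, with $\mathrm{M}(\boldsymbol{\upsilon})$ replaced by $\mathrm{M}(\boldsymbol{\upsilon}')^{\ast}$. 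Continuity of $\boldsymbol{\imath}_{0}^{t}(\mathrm{M}^{\ddagger})$ and the identity $\Vert\mathrm{M}^{\ddagger}\Vert_{q,t}^{s}(r)=\Vert\mathrm{M}\Vert_{q,t}^{s}(r)$ then follow from \eqref{2oneh}, from $\Vert\mathrm{M}(\boldsymbol{\upsilon})^{\ast}\Vert_{q}=\Vert\mathrm{M}(\boldsymbol{\upsilon})\Vert_{q}$ (duality of $\mathit{G}_{\star}(q)$ and $\mathit{G}_{\star}(q^{-1})$, using $\Vert\mathrm{I}\Vert_{q}=1$), and from the invariance of $r$ and $s$ under the flip, which swaps the two $r$-weighted families and fixes the $s$-weighted and $L^{1}$ ones. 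For the differential \eqref{2onej} I would disintegrate \eqref{2onee} by the maximal point of the total chain $\upsilon=\sqcup_{\mu,\nu}\upsilon_{\nu}^{\mu}$: the empty chain contributes $\mathrm{M}(\emptyset)$, while a nonempty one has a last point $x$ lying in exactly one family $\upsilon_{\nu}^{\mu}$; peeling it off turns \eqref{2onee} into $\int_{\mathbb{X}^{t}}\Lambda_{\nu}^{\mu}(\mathrm{D}_{\nu}^{\mu},\mathrm{d}x)$ in the notation of \eqref{2onec}, with $\mathrm{D}_{\nu}^{\mu}(x)=\boldsymbol{\imath}_{0}^{t(x)}(\mathrm{\dot{M}}(\mathbf{x}_{\nu}^{\mu}))$ the multiple integral of the point derivative $\mathrm{\dot{M}}(\mathbf{x},\upsilon_{\cdot}^{\cdot})=\mathrm{M}(\upsilon_{\cdot}^{\cdot}\sqcup\mathbf{x})$, reproducing the four terms of \eqref{2onec}; the stated bounds on $\mathbf{D}$ come from \eqref{2oneh} applied to $\mathrm{\dot{M}}(\mathbf{x}_{\nu}^{\mu})$, Minkowski's integral inequality, and the disintegration $\int_{\mathbb{X}^{t}}\bigl(\int_{\mathcal{X}^{t(x)}}f(\upsilon\sqcup x)\,\mathrm{d}\upsilon\bigr)w(x)\,\mathrm{d}x\leq\int_{\mathcal{X}^{t}}f(\upsilon)\,\mathrm{d}\upsilon$ (Guichardet measure) with $w=1$ for preservation, $w=r$ for creation and annihilation, $w=s$ for exchange, matching the $L^{1}$, $L^{2}(r)$ and $L^{\infty}(s)$ norms of $\mathbf{D}$ against $\Vert\mathrm{M}\Vert_{q,t}^{s}(r)$.

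The main obstacle is the first part: getting the hierarchy of splittings right so that the three index types are peeled in the correct order — exchange outermost by one pair $\Delta,\Delta^{\ast}$ with weights $s^{-1},s$; creation and annihilation by an inner pair with weights $r^{-1},r$; preservation by Minkowski — so that the Lemma's isometry and co-isometry cost nothing at each stage and the weight budget closes to exactly $p\geq r^{-1}+q+s^{-1}$. Once that architecture is fixed, every remaining step is an application of \eqref{2oned}, Cauchy--Schwarz, or Fubini.
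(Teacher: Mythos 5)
Your proposal is correct and reaches all three assertions, but the route you take to the central estimate (\ref{2oneh}) is genuinely different from the paper's. The paper estimates the diagonal sesquilinear form $\left\langle \chi\mid\mathrm{T}_{t}\chi\right\rangle $: it integrates out $\upsilon_{+}^{-}$ first, applies Cauchy--Schwarz with weight $r$ in the pair $(\upsilon_{+}^{\circ},\upsilon_{\circ}^{-})$, uses $\Vert\mathring{\chi}\Vert(s^{-1},q)=\Vert\chi\Vert(q+s^{-1})$, and then converts the form bound into the operator norm via $\sup\Vert\mathrm{T}\chi\Vert(p^{-1})/\Vert\chi\Vert(p)=\sup|\left\langle \chi\mid\mathrm{T}\chi\right\rangle |/\Vert\chi\Vert(p)^{2}$ --- a step which, for non-Hermitian $\mathrm{T}$, really rests on the fact that the same computation bounds the off-diagonal form $\left\langle \chi_{1}\mid\mathrm{T}\chi_{2}\right\rangle $. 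You instead bound $\Vert\mathrm{T}_{t}\chi\Vert(p^{-1})$ directly by factoring (\ref{2onee}) through the splitter: Minkowski in $\upsilon_{+}^{-}$, an outer $\Delta,\Delta^{\ast}$ pair for the exchange chain with weights $s^{-1},s$, an inner pair for annihilation and creation with weights $r^{-1},r$, the operator norm $\Vert\mathrm{M}(\boldsymbol{\upsilon})\Vert_{q}$ in the middle; your weight bookkeeping ($s\cdot s^{-2}=s^{-1}$ against the isometry $\int s^{-1}(\upsilon_{\circ}^{\circ})\Vert\mathring{\chi}(\upsilon_{\circ}^{\circ})\Vert^{2}(r^{-1}+q)\,\mathrm{d}\upsilon_{\circ}^{\circ}=\Vert\chi\Vert^{2}(s^{-1}+r^{-1}+q)$) closes to exactly $p\geq r^{-1}+q+s^{-1}$, and every instance of the Lemma you invoke is applied with admissible parameters $q_{0}\in\mathfrak{p}_{0}$, $q_{1}\in\mathfrak{p}_{1}$. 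What your version buys is a structural proof that leans entirely on the isometry/co-isometry of $\Delta$ and avoids the form-to-norm conversion; what the paper's version buys is brevity, since one Cauchy--Schwarz on the quadratic form does the work of both splitter pairs at once. Your treatments of the adjoint (\ref{2onei}) (expand the pairing, apply (\ref{2oned}) and Fubini, transpose the table) and of the differential (\ref{2onej}) (peel the maximal point via $\int_{\mathcal{X}^{t}}\chi(\vartheta)\,\mathrm{d}\vartheta=\chi(\emptyset)+\int_{\mathbb{X}^{t}}\mathrm{d}x\int_{\mathcal{X}^{t(x)}}\mathring{\chi}(x,\vartheta)\,\mathrm{d}\vartheta$, bound each $\mathrm{D}_{\nu}^{\mu}(x)$ by $\Vert\mathrm{\dot{M}}(\mathbf{x}_{\nu}^{\mu})\Vert_{q,t(x)}^{s}(r)$, re-integrate) coincide with the paper's. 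The only cosmetic slip is your final disintegration with ``$w=s$ for exchange'': the $\mathrm{D}_{\circ}^{\circ}$ bound is not an integral inequality but an essential supremum, $s(x)s(\upsilon_{\circ}^{\circ})=s(x\sqcup\upsilon_{\circ}^{\circ})$ folding the extra point into the $\mathrm{ess}\sup$ over $\upsilon_{\circ}^{\circ}$; this does not affect the conclusion.
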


\begin{proof}
Using property (\ref{2oned}) in the form
\[
\int\sum_{\sqcup\upsilon_{\nu}^{\circ}=\vartheta}f(\upsilon_{-}^{\circ
},\upsilon_{\circ}^{\circ},\upsilon_{+}^{\circ})\mathrm{d}\vartheta=\iiint
f(\upsilon_{-}^{\circ},\upsilon_{\circ}^{\circ},\upsilon_{+}^{\circ}%
)\prod_{\nu}\mathrm{d}\upsilon_{\nu}^{\circ},
\]
it is easy to find that from the definition (\ref{2onee}) for $\chi
\in\mathit{G}_{+}$ we have%
\begin{align*}
&{\int\left\langle \chi(\vartheta)\mid\lbrack\mathrm{T}_{t}\chi](\vartheta
)\right\rangle \mathrm{d}\vartheta}\\
&  =\int_{\mathcal{X}^{t}}\mathrm{d}\upsilon_{+}^{-}\int_{\mathcal
{X}^{t}}\mathrm{d}\upsilon_{+}^{\circ}\int_{\mathcal{X}^{t}}\mathrm{d}%
\upsilon_{\circ}^{-}\int_{\mathcal{X}^{t}}\mathrm{d}\upsilon_{\circ}^{\circ
}\left\langle \mathring{\chi}(\upsilon_{\circ}^{\circ}\sqcup\upsilon
_{+}^{\circ})\mid\mathrm{M}(\boldsymbol{\upsilon})\mathring{\chi}(\upsilon_{\circ
}^{-}\sqcup\upsilon_{\circ}^{\circ})\right\rangle \\
&  =\int_{\mathcal{X}^{t}}\mathrm{d}\upsilon_{+}^{-}\int_{\mathcal{X}^{t}%
}\mathrm{d}\upsilon_{+}^{\circ}\int_{\mathcal{X}^{t}}\mathrm{d}\upsilon
_{\circ}^{-}\int_{\mathcal{X}^{t}}\mathrm{d}\upsilon_{\circ}^{\circ
}\left\langle \mathrm{M}(\boldsymbol{\upsilon})^{\ast}\mathring{\chi
}(\upsilon_{\circ}^{\circ}\sqcup\upsilon_{+}^{\circ})\mid\mathring{\chi
}(\upsilon_{\circ}^{-}\sqcup\upsilon_{\circ}^{\circ})\right\rangle \\
&  =\int\left\langle [\mathrm{T}_{t}^{\ast}\chi](\vartheta)\mid
\chi(\vartheta)\right\rangle \mathrm{d}\vartheta,
\end{align*}
that is, $\mathrm{T}_{t}^{\ast}$ acts as $\boldsymbol{\imath}_{0}%
^{t}(\mathrm{M}^{\star})$ in (\ref{2onee}) with $\mathrm{M}^{\star
}(\boldsymbol{\upsilon})=\mathrm{M}(\boldsymbol{\upsilon}^{\prime})^{\ast}%
$, where $(\upsilon_{\nu}^{\mu})^{\prime}=(\upsilon_{-\mu}^{-\nu})$ with
respect to the inversion $-:(-,\circ,+)\mapsto(+,\circ,-)$. More precisely,
this yields $\Vert\boldsymbol{\imath}_{0}^{t}(\mathrm{M})\Vert_{p}%
=\Vert\boldsymbol{\imath}_{0}^{t}(\mathrm{M}^{\star})\Vert_{p}$, since
$\Vert\mathrm{T}\Vert_{p}=\Vert\mathrm{T}^{\ast}\Vert_{p}$ by the
definition (\ref{2oneh}) of $p$-norm and by
\[
\sup\left\{  |\left\langle \chi\mid\mathrm{T}\chi\right\rangle |/\Vert
\chi\Vert(p)\Vert\chi\Vert(p)\right\}  =\sup\{|\left\langle \mathrm{T}%
^{\ast}\chi\mid\chi\right\rangle |/\Vert\chi\Vert(p)\Vert\chi\Vert(p)\}.
\]
We estimate the integral $\left\langle \chi\mid\mathrm{T}_{t}\chi\right\rangle
$ using the Schwartz inequality
\[
\int\Vert\mathring{\chi}(\upsilon)\Vert(q)\Vert\mathring{\chi}(\upsilon
)\Vert(q)s^{-1}(\upsilon)\mathrm{d}\upsilon\leq\Vert\mathring{\chi}%
\Vert(s^{-1},q)\Vert\mathring{\chi}\Vert(s^{-1},q)
\]
and the property (\ref{2oned}) of the multiple integral according to which
\[
\Vert\mathring{\chi}\Vert(s^{-1},q)=\Vert\chi\Vert(q+s^{-1})
\]
then
\begin{align*}
\lefteqn{\left\vert \left\langle \chi\mid\mathrm{T}_{t}\chi\right\rangle
\right\vert \leq
\int_{\mathcal{X}^t}\int_{\mathcal{X}^t}\int_{\mathcal{X}^t}\int_{\mathcal{X}^t} \|\mathrm{\mathring{\chi}}(\upsilon_\circ^\circ\sqcup\upsilon^\circ_+)\|(q) \|\mathrm{M}(\boldsymbol{\upsilon})\mathrm{\mathring{\chi}}(\upsilon^\circ_\circ\sqcup\upsilon^-_\circ)\|(q^{-1}) \mathrm{d}^4\boldsymbol{\upsilon}} \\
& \quad\leq\int_{\mathcal{X}^{t}}\mathrm{d}\upsilon_{\circ
}^{\circ}\int_{\mathcal{X}^{t}}\int_{\mathcal{X}^{t}}\Vert\mathring{\chi
}(\upsilon_{\circ}^{\circ}\sqcup\upsilon_{+}^{\circ})\Vert(q)\left(
\int_{\mathcal{X}^{t}}\Vert\mathrm{M}(\boldsymbol{\upsilon})\Vert_{q}\mathrm
{d}\upsilon_{+}^{-}\right)  \Vert\mathring{\chi}(\upsilon_{\circ}^{-}%
\sqcup\upsilon_{\circ}^{\circ})\Vert(q)\mathrm{d}\upsilon_{\circ}^{-}%
\mathrm{d}\upsilon_{+}^{\circ}\\
&  \quad\leq\int_{\mathcal{X}^{t}}\mathrm{d}\upsilon\left(  \int
_{\mathcal{X}^{t}}\Vert\mathring{\chi}(\upsilon\sqcup\upsilon_{+}^{\circ
})\Vert^{2}(q)\frac{\mathrm{d}\upsilon_{+}^{\circ}}{r(\upsilon_{+}^{\circ}%
)}\int_{\mathcal{X}^{t}}\Vert\mathring{\chi}(\upsilon\sqcup\upsilon_{\circ
}^{-})\Vert^{2}(q)\frac{\mathrm{d}\upsilon_{\circ}^{-}}{r(\upsilon_{\circ}%
^{-})}\right)  ^{\frac{1}{2}}\Vert\mathrm{M}_{\circ}^{\circ}%
(\upsilon)\Vert_{q,t}(r)\\
&  \quad=\int_{\mathcal{X}^{t}}\mathrm{d}\upsilon\Vert\mathring{\chi}%
(\upsilon)\Vert(r^{-1}+q)\Vert\mathrm{M}_{\circ}^{\circ}(\upsilon)\Vert
_{q,t}(r)\Vert\mathring{\chi}(\upsilon)\Vert(r^{-1}+q)\\
&  \quad\leq\underset{\upsilon\in\mathcal{X}^{t}}{\mathrm{ess}\sup
}\Big\{s(\upsilon)\Vert\mathrm{M}_{\circ}^{\circ}(\upsilon)\Vert_{q,t}%
(r)\Big\}\Vert\chi\Vert(r^{-1}+q+s^{-1})\Vert\chi\Vert(r^{-1}+q+s^{-1}),
\end{align*}
where $\Vert\mathrm{M}_{\circ}^{\circ}(\upsilon_{\circ}^{\circ})\Vert
_{q,t}(r)=(\int_{\mathcal{X}^{t}}\int_{\mathcal{X}^{t}}(\int_{\mathcal{X}^{t}%
}\Vert\mathrm{M}(\boldsymbol{\upsilon})\Vert_{q}\mathrm{d}\upsilon_{+}%
^{-})^{2}r(\upsilon_{\circ}^{-}\sqcup\upsilon_{+}^{\circ})\mathrm{d}%
\upsilon_{\circ}^{-}\mathrm{d}\upsilon_{+}^{\circ})^{1/2}$. Then since
\[
\underset{\upsilon\in\mathcal{X}^{t}%
}{\mathrm{ess}\sup}\{s(\upsilon)\Vert\mathrm{M}_{\circ}^{\circ}(\upsilon
)\Vert_{q,t}(r)\}\leq\Vert\mathrm{M}\Vert_{q,t}^{s}(r),
\]
 and since
 \[
 \sup_{\chi\in G_\star(p)}\frac{\|\mathrm{T}\chi\|(p^{-1})}{\|\chi\|(p)}= \sup_{\chi\in G_\star(p)} \frac{|\langle\chi|\mathrm{T}\chi\rangle|}{\|\chi\|(p)\|\chi\|(p)},
  \]
it follows that
$\Vert\mathrm{T}_{t}\Vert_{p}\leq\Vert\mathrm{M}\Vert^s_{q,t}(r)$ for all
$p\geq r^{-1}+q+s^{-1}$.

Now using the definition (\ref{2onee}) and the property
\[
\int_{\mathcal{X}^{t}}\chi(\vartheta)\mathrm{d}\vartheta=\chi(\emptyset
)+\int_{\mathbb{X}^{t}}\mathrm{d}x\int_{\mathcal{X}^{t(x)}}\mathring{\chi}%
(x,\vartheta)\mathrm{d}\vartheta,
\]
where $\mathring{\chi}(x,\vartheta)=\chi(\vartheta\sqcup x)$, it is easy to
see that $[(\mathrm{T}_{t}-\mathrm{T}_{0})\chi](\vartheta
)=[(\boldsymbol{\imath}_{0}^{t}(\mathrm{M})-\mathrm{M}(\emptyset
))\chi](\vartheta)=$
\begin{align*}
&  =\int_{\mathbb{X}^{t}}\mathrm{d}x\sum_{\upsilon_{\circ}^{\circ}\sqcup\upsilon
_{+}^{\circ}\subseteq\vartheta}^{t(\upsilon_{\nu}^{\circ})<t(x)}%
\{\int_{\mathcal{X}^{t(x)}}\mathrm{d}\upsilon_{+}^{-}[\int_{\mathcal{X}%
^{t(s)}}\mathrm{d}\upsilon_{\circ}^{-}(\mathrm{\dot{M}}(\mathbf{x}_{+}%
^{-},\boldsymbol{\upsilon})\mathring{\chi}(\upsilon_{\circ}^{-}\sqcup
\upsilon_{\circ}^{\circ})\\
&  \qquad\qquad\qquad\qquad\qquad\qquad\qquad\qquad\quad+\mathrm{\dot{M}}(\mathbf{x}_{\circ}^{-},\boldsymbol{\upsilon
})\mathring{\chi}(x\sqcup\upsilon_{\circ}^{-}\sqcup\upsilon_{\circ}^{\circ
}))]\}(\vartheta\boldsymbol{\setminus}\upsilon_{\circ}^{\circ
}\boldsymbol{\setminus}\upsilon_{+}^{\circ})\\
& \quad +\sum_{x\in\mathbb{X}^{t}}\int_{\upsilon_{\circ}^{\circ}\sqcup\upsilon
_{+}^{\circ}\subseteq\vartheta}^{t(\upsilon_{\nu}^{\circ})<t(x)}%
\{\int_{\mathcal{X}^{t(x)}}\mathrm{d}\upsilon_{+}^{-}[\int_{\mathcal{X}%
^{t(x)}}\mathrm{d}\upsilon_{\circ}^{-}(\mathrm{\dot{M}}(\mathbf{x}_{+}^{\circ
},\boldsymbol{\upsilon})\mathring{\chi}(\upsilon_{\circ}^{-}\sqcup
\upsilon_{\circ}^{\circ})\\
&  \qquad\qquad\qquad\qquad\qquad\qquad\qquad\qquad\qquad+\mathrm{\dot{M}}(\mathbf{x}_{\circ}^{\circ},\boldsymbol{\upsilon
})\mathring{\chi}(x\sqcup\upsilon_{\circ}^{-}\sqcup\upsilon_{\circ}^{\circ
}))]\}((\vartheta\boldsymbol{\setminus}\upsilon_{\circ}^{\circ
}\boldsymbol{\setminus}\upsilon_{+}^{\circ})\\
&  =\int_{\mathbb{X}^{t}}\mathrm{d}x[\mathrm{D}_{+}^{-}(x)\chi+\mathrm{D}_{\circ}%
^{-}(x)\mathring{\chi}\left(  x\right)  ](\vartheta)+\sum_{x\in \mathbb{X}^{t}%
}[\mathrm{D}_{+}^{\circ}(x)\chi+\mathrm{D}_{\circ}^{\circ}(x)\mathring{\chi
}\left(  x\right)  ](\vartheta\boldsymbol{\setminus}x).
\end{align*}
Consequently, $\mathrm{T}_{t}-\mathrm{T}_{0}=\sum\Lambda_{\mu}^{\nu
}(\mathrm{D}_{\nu}^{\mu},\mathbb{X}^{t})$, where $\Lambda_{\nu}^{\mu}(\mathrm{D}%
,\bigtriangleup)$ are defined in (\ref{2onec}) as operator-valued measures on
$\mathbb{X}$ of operator-functions
\begin{align*}
\lbrack\mathrm{D}_{+}^{\mu}(x)\chi](\vartheta) &  =\sum_{\upsilon_{\circ
}^{\circ}\sqcup\upsilon_{+}^{\circ}\subseteq\vartheta}^{t(\upsilon_{\nu
}^{\circ})<t(x)}\int_{\mathcal{X}^{t(x)}}\mathrm{d}\upsilon_{+}^{-}%
\int_{\mathcal{X}^{t(x)}}\mathrm{d}\upsilon_{\circ}^{-}[\mathrm{\dot{M}%
}(\mathbf{x}_{+}^{\mu},\boldsymbol{\upsilon})\mathring{\chi}(\upsilon_{\circ
}^{-}\sqcup\upsilon_{\circ}^{\circ})](\vartheta_{-}^{\circ}),\\
\lbrack\mathrm{D}_{\circ}^{\mu}(x)\mathring{\chi}\left(  x\right)
](\vartheta) &  =\sum_{\upsilon_{\circ}^{\circ}\sqcup\upsilon_{+}^{\circ
}\subseteq\vartheta}^{t(\upsilon_{\nu}^{\circ})<t(x)}\int_{\mathcal{X}^{t(x)}%
}\mathrm{d}\upsilon_{+}^{-}\int_{\mathcal{X}^{t(x)}}\mathrm{d}\upsilon_{\circ
}^{-}[\mathrm{\dot{M}}(\mathbf{x}_{\circ}^{\mu},\boldsymbol{\upsilon
})\mathring{\chi}(x\sqcup\upsilon_{\circ}^{-}\sqcup\upsilon_{\circ}^{\circ
})](\vartheta_{-}^{\circ}),
\end{align*}
acting on $\chi\in\mathit{G}_{+}$ and $\mathring{\chi}\left(  \upsilon\right)
\in\mathfrak{k}_{\upsilon}^{\otimes}\otimes\mathit{G}_{+}$, where
$\vartheta_{-}^{\circ}=\vartheta\cap\overline{\left(  \upsilon_{\circ}^{\circ
}\sqcup\upsilon_{+}^{\circ}\right)  }=\vartheta\boldsymbol{\setminus
}\upsilon_{\circ}^{\circ}\boldsymbol{\setminus}\upsilon_{+}^{\circ}$.
This can be written in terms of (\ref{2onee}) as
\[
\mathrm{D}_{\nu}^{\mu
}(x)=\boldsymbol{\imath}_{0}^{t}(\mathrm{\dot{M}}(\mathbf{x}_{\nu}^{\mu})).
\]
Because of the inequality $\Vert \mathrm{T}_{t}\Vert
_{p}\leq \Vert \mathrm{M}\Vert _{q,t}^{s}(r)$ for all $p\geq r^{-1}+q+s^{-1}$ we
obtain $\Vert \mathrm{D}_{+}^{-}\Vert _{p,t}^{(1)}\leq \Vert \mathrm{M}\Vert _{q,t}^{s}(r)$,
since $\Vert \mathrm{D}_{+}^{-}(x)\Vert _{p}\leq \Vert \dot{\mathrm{M}}(\mathbf{x}%
_{+}^{-})\Vert _{q,t(x)}^{s}(r)$:
\begin{eqnarray*}
\int_{\mathbb{X}^{t}}\Vert \mathrm{D}_{+}^{-}(x)\Vert _{p}\mathrm{d}x\leq
\int_{\mathbb{X}^{t}}\Vert \dot{\mathrm{M}}(\mathbf{x}_{+}^{-})\Vert _{q,t(x)}^{s}(r)\mathrm{d}%
x
&=&{\int_{\mathbb{X}^{t}}}\mathrm{d}x{\int_{\mathcal{X}^{t(x)}}}\Vert \mathrm{M}_{+}^{-}(x\sqcup
\upsilon )\Vert _{q,t(x)}^{s}(r)\mathrm{d}\upsilon \\
 &=&\int_{\mathcal{X}%
^{t}}\Vert \mathrm{M}_{+}^{-}(\upsilon )\Vert _{q,t}^{s}(r)\mathrm{d}\upsilon
-\Vert \mathrm{M}_{+}^{-}(\emptyset )\Vert _{q,t}^{s}(r) \\
&=&\Vert \mathrm{M}\Vert _{q,t}^{s}(r)-\Vert \mathrm{M}_{+}^{-}(\emptyset )\Vert _{q,t}^{s}(r).
\end{eqnarray*}%
For the estimate of $\mathrm{D}^\circ_+$ we shall require the use of the norm
\[
\|\mathrm{M}^\circ_+\|^s_{q,t}(r):= \left(\int_{\mathcal{X}^t}\Big(\|\mathrm{M}^\circ_+(\upsilon^\circ_+)\|^s_{q,t}(r)\Big)^2 r^\otimes(\upsilon^\circ_+)\mathrm{d}\upsilon^\circ_+\right)^{1/2},
\]
where
\[
\|\mathrm{M}^\circ_+(\upsilon^\circ_+)\|^s_{q,t}(r):= \left(\int_{\mathcal{X}^t}\left(\int_{\mathcal{X}^t}\mathrm{ess}\sup_{\upsilon^\circ_\circ\in \mathcal{X}^t}s(\upsilon^\circ_\circ)\|\mathrm{M}(\boldsymbol{\upsilon})\|_q\mathrm{d}\upsilon^-_+\right)^2 r^\otimes(\upsilon_\circ^-)\mathrm{d}\upsilon_\circ^-\right)^{1/2};
\]
in particular, $\|\mathrm{M}^\circ_+\|^s_{q,t}(r)\leq\|\mathrm{M}\|^s_{q,t}(r)$. So we have
\begin{align*}
\Vert \mathrm{D}_{+}^{\circ }\Vert _{p,t}^{(2)}(r)^2& \leq  \int_{\mathbb{X}^{t}}(\Vert
\dot{\mathrm{M}}^\circ_+(\mathbf{x}_{+}^{\circ })\Vert _{q,t(x)}^{s}(r))^{2}r(x)\mathrm{d}%
x \\
& = \|\mathrm{M}^\circ_+\|^s_{q,t}(r)^2-\|\mathrm{M}^\circ_+(\emptyset)\|^s_{q,t}(r)^2  \leq \Vert \mathrm{M}\Vert _{q,t}^{s}(r)^2.
\end{align*}%
In a similar manner we obtain the estimate for $\mathrm{D}^-_\circ$,
\begin{align*}
\Vert \mathrm{D}_{\circ}^{-}\Vert _{p,t}^{(2)}(r)^2& \leq  \int_{\mathbb{X}^{t}}(\Vert
\dot{\mathrm{M}}_\circ^-(\mathbf{x}^{-}_{\circ })\Vert _{q,t(x)}^{s}(r))^{2}r(x)\mathrm{d}%
x \\
& = \|\mathrm{M}_\circ^-\|^s_{q,t}(r)^2-\|\mathrm{M}_\circ^-(\emptyset)\|^s_{q,t}(r)^2 \leq \Vert \mathrm{M}\Vert _{q,t}^{s}(r)^2.
\end{align*}%
Finally, from $\Vert \mathrm{D}_{\circ }^{\circ }(x)\Vert _{p}\leq \Vert \dot{\mathrm{M}}%
\left( \mathbf{x}_{\circ }^{\circ }\right) \Vert _{q,t(x)}^{s}(r)$ we
similarly obtain
\begin{equation*}
\Vert \mathrm{D}_{\circ }^{\circ }\Vert _{p,t}^{(\infty )}(s)\leq \mathrm{ess}%
\sup_{x\in \mathbb{X}^{t}}\{s(x)\Vert \dot{\mathrm{M}}(\mathbf{x}_{\circ }^{\circ })\Vert
_{q,t(x)}^{s}(r)\}\leq \Vert \mathrm{M}\Vert _{q,t}^{s}(r)
\end{equation*}%
if $p\geq r^{-1}+q+s^{-1}$, which concludes the proof.
\end{proof}
Since $\mathrm{T}_t=\boldsymbol{\imath}^t_0(\mathrm{M})$ the theorem obviously states that $\|\boldsymbol{\imath}^t_0(\mathrm{M})\|_p\leq\|\mathrm{M}\|^s_{q,t}(r)$ for all $p\geq\frac{1}{r}+q+\frac{1}{s}$, and in particular for the case when $\mathrm{M}(\boldsymbol{\upsilon})=0$ if $\sum|\upsilon^\mu_\nu|\neq1$, and $\mathrm{M}(\mathbf{x})=\mathrm{D}(\mathbf{x})$. Then the result of the theorem becomes
\[
\|\boldsymbol{i}^t_0(\mathbf{D})\|_p\leq\|\mathrm{D}\|^s_{q,t}(r)
\]
and now we shall begin to evaluate the quantum stochastic norm (\ref{2onef}) explicitly for the single integrand $\mathrm{D}(\mathbf{x})$.
\[
\|\mathrm{D}\|^s_{q,t}(r)=\|\mathrm{D}^-_+\|^{(1)}_{q,t}+\|\mathrm{D}^-_+(\emptyset)\|^{s}_{q,t}(r)
\]
one may now proceed in a similar manner to find that
\[
\|\mathrm{D}^-_+(\emptyset)\|^{s}_{q,t}(r)=\|\mathrm{D}_\circ^-\|^{(2)}_{q,t}(r)+ \|\mathrm{D}^\circ_+\|^{(2)}_{q,t}(r) +\|\mathrm{D}_\circ^\circ\|^{(\infty)}_{q,t}(s),
\]
and thus we have recovered that inequality stated at the beginning of this section.

\section{Adapted and Q-Adapted QS Integrals}

The quantum-stochastic integral \textup{(\ref{2onee})} constructed in
{\cite{Be91}}, as well as its single variations \textup{(\ref{2oneb})}
introduced in {\cite{Be90c}}, are defined explicitly and do not require
that the functions $\mathrm{M}$ and $\mathbf{D}$ under the integral be
adapted. By virtue of the continuity we have proved above, they can be
approximated in the inductive convergence by the sequence of integral sums
$\boldsymbol{\imath}_{0}^{t}(\mathrm{M}_{n})$, $\boldsymbol{i}_{0}%
^{t}(\mathbf{D}_{n})$ corresponding to step measurable operator functions
$\mathrm{M}_{n}$ and $\mathbf{D}_{n}$ if the latter converge inductively to
$\mathrm{M}$ and $\mathbf{D}$ in the poly-norm \textup{(\ref{2onef})}.
In fact, if there exist functions $r$, $s$ with $r^{-1}$, $s^{-1}%
\in\mathfrak{p}_{0}$ and $q\in\mathfrak{p}_{1}$ such that $\Vert\mathrm{M}%
_{n}-\mathrm{M}\Vert_{q,t}^{s}(r)\rightarrow0$, then there also exists a
function $p\in\mathfrak{p}_{1}$ such that $\Vert\boldsymbol{\imath}_{0}%
^{t}(\mathrm{M}_{n}-\mathrm{M})\Vert_{p}\rightarrow0$, and we have $p\geq
r^{-1}+q+s^{-1}$ by the inequality \textup{(\ref{2oneh})}, which implies the
inductive convergence $\boldsymbol{\imath}_{0}^{t}(\mathrm{M}_{n}%
)\rightarrow\boldsymbol{\imath}_{0}^{t}(\mathrm{M})$ as a result of the
linearity of $\boldsymbol{\imath}_{0}^{t}$.

Let $\mathrm{Q}:x\mapsto \mathfrak{L}\left( \mathfrak{k}_{x}\right) $ be a
measurable adjointable operator-valued function with $q$-contractive values $\left\Vert
\mathrm{Q}\left( x\right) \right\Vert \leq q\left( x\right) $ with respect
to a positive function $q\in \mathfrak{p}$. We shall say that the integrand $%
\mathbf{D}(x)$ is $\mathrm{Q}$-adapted if it has the product form
\begin{equation}
\mathrm{D}%
_{\nu }^{\mu }(x)=\mathrm{D}_{\nu }^{\mu }(x)^{t\left( x\right) }\otimes
\mathrm{Q}_{t(x)}^{\otimes }
\end{equation}
with respect to the Fock split $\mathcal{G}_\ast=$
$\mathcal{G}_\ast^{t\left( x\right) }\otimes \mathcal{F}_{\ast[t\left( x\right) }$
corresponding to $\mathbb{X}=\mathbb{X}^{t\left( x\right) }\sqcup \mathbb{X}_{[t\left( x\right) }$,
where $\mathrm{D}_{\nu }^{\mu }(x)^{t\left( x\right) }\equiv\mathrm{K}_\nu^\mu(x)$ is the restricted
action $\mathit{G}_{+}^{t\left( x\right) }\rightarrow \mathit{G}%
_{-}^{t\left( x\right) }$ of $\mathrm{D}_{\nu }^{\mu }(x)$ corresponding to
the projections $\mathrm{P}_{t(x)}\mathrm{D}^\mu_\nu(x)\mathrm{P}_{t(x)}$ where  the projection $\mathrm{P}_t$ is given on $\mathcal{G}_\ast$ by
 \[
 \big[\mathrm{P}_t\chi\big](\vartheta)=\chi(\vartheta^t)\delta_\emptyset(\varkappa),\quad \mathrm{P}_0\equiv\mathrm{P}_\emptyset\]
 where $\varkappa=\vartheta\setminus\vartheta^t$, and $\delta_\emptyset(\vartheta)=1$ if $\vartheta$ is empty, and is otherwise zero.
 We can now write the $\mathrm{Q}$-adapted QS\
integrals as a more explicit form of the general QS\ integrals (\ref{2onec})
\begin{align}
\qquad\;\lbrack\Lambda_{-}^{+}(\mathrm{D}_{+}^{-},\boldsymbol{\bigtriangleup}%
)\chi](\vartheta) &  =\int_{\boldsymbol{\bigtriangleup}}[\mathrm{K}_{+}%
^{-}(x)\otimes\mathrm{Q}^\otimes(\varkappa)\mathring{\chi}(\varkappa)](\vartheta^{t(x)})\mathrm{d}x,\quad \;\;\;\;\;\;x\notin\vartheta\nonumber\\
\;\lbrack\Lambda_{\circ}^{+}(\mathrm{D}_{+}^{\circ},\boldsymbol{\bigtriangleup
})\chi](\vartheta) &  =\sum_{x\in\boldsymbol{\bigtriangleup}\cap\vartheta
}[\mathrm{K}_{+}^{\circ}(x)\otimes\mathrm{Q}^\otimes(\varkappa)\mathring{\chi}(\varkappa)] (\vartheta^{t(x)}),\nonumber\\
\qquad\;\lbrack\Lambda_{-}^{\circ}(\mathrm{D}_{\circ}^{-},\boldsymbol{\bigtriangleup
})\chi](\vartheta) &  =\int_{\boldsymbol{\bigtriangleup}}[\mathrm{K}_{\circ
}^{-}(x)\otimes\mathrm{Q}^\otimes(\varkappa)\mathring{\chi}(x\sqcup\varkappa)](\vartheta^{t(x)})\mathrm{d}x,\quad x\notin\vartheta
\nonumber\\
\lbrack\Lambda_{\circ}^{\circ}(\mathrm{D}_{\circ}^{\circ}%
,\boldsymbol{\bigtriangleup})\chi](\vartheta) &  =\sum_{x\in
\boldsymbol{\bigtriangleup}\cap\vartheta}[\mathrm{K}_{\circ}^{\circ
}(x)\otimes\mathrm{Q}^\otimes(\varkappa)\mathring{\chi}(x\sqcup\varkappa)](\vartheta^{t(x)}).%
\end{align}
Here $\mathrm{Q}_{}^{\otimes }\left(
\varkappa \right) =\otimes _{x\in \varkappa }\mathrm{Q}\left( x\right) $ for
any $\varkappa =\vartheta_{t(x)}\in \mathcal{X}_{t\left( x\right) }$ and $\mathring{\chi}%
(\varkappa ,\vartheta ^{t})=\chi (\vartheta ^{t}\sqcup \varkappa )$.

In the next proposition we look at the some of the properties of a special class of $\mathrm{Q}$-adapted integrals. In particular, we consider the $g$-commutator of the creation and annihilation integrals $\Lambda(\mathbf{D}^\circ_+)$ and $\Lambda(\mathbf{D}^-_\circ)$, given by
\begin{equation}
\big[\Lambda(\mathbf{D}^-_\circ),\Lambda(\mathbf{D}^\circ_+)\big]_{g}:= \Lambda(\mathbf{D}^-_\circ)\Lambda(\mathbf{D}^\circ_+)-g\big(\Lambda(\mathbf{D}^\circ_+) \Lambda(\mathbf{D}_\circ^-)\big), \label{qcom}
\end{equation}
where $g$ is considered to be an element of $ L^\infty(\mathbb{X}\times\mathbb{X})$ in the sense that its action on $\Lambda(\mathbf{D}^\circ_+) \Lambda(\mathbf{D}_\circ^-)$ is defined as
\begin{equation}
g\big(\Lambda(\mathbf{D}^\circ_+) \Lambda(\mathbf{D}_\circ^-)\big)=\iint\Lambda^+_\circ(\mathrm{d}z)\mathrm{D}^\circ_+(z)g(z,x)\mathrm{D}^-_\circ(x) \Lambda^\circ_-(\mathrm{d}x),
\end{equation}
with respect to the identifications
\[
\Lambda(\mathbf{D}^\circ_+(z),\mathrm{d}z)=\Lambda^+_\circ(\mathrm{d}z)\mathrm{D}^\circ_+(z) ,\quad\Lambda(\mathbf{D}_\circ^-(x),\mathrm{d}x)=\mathrm{D}^-_\circ(x) \Lambda^\circ_-(\mathrm{d}x),
\]
$x,z\in\mathbb{X}$.

\begin{proposition}
Let $\Lambda(\mathbf{D}_\circ^-)$ and $\Lambda(\mathbf{D}^\circ_+)$ be $\mathrm{Q}$-adapted quantum stochastic integrals  of the form
\begin{align*}
[\Lambda(\mathbf{D}^-_\circ,\mathbb{X}^t)\chi](\vartheta)&= \int_{\mathbb{X}^t} \mathrm{K}_\circ^-(x,\vartheta^{t(x)})\otimes{\mathrm{Q}^\ast}^{\otimes}(\vartheta_{t(x)})\chi(\vartheta\sqcup x)\mathrm{d}x,\\
[\Lambda(\mathbf{D}^\circ_+,\mathbb{X}^t)\chi](\vartheta)&=\sum_{z\in\vartheta^t}
\mathrm{K}^\circ_+(z,\vartheta^{t(z)})\otimes\mathrm{Q}^{\otimes}(\vartheta_{t(z)}) \chi(\vartheta\setminus z),
\end{align*}
 such that $\mathrm{K}^-_\circ(x)$ and $\mathrm{K}^\circ_+(x)$ are diagonal on $\mathcal{X}^{t(x)}$ in the sense that  $\big[\mathrm{K}^\circ_+(x)\chi\big](\vartheta^{t(x)})= \mathrm{K}^\circ_+(x,\vartheta^{t(x)})\chi(\vartheta^{t(x)})$ and $\big[\mathrm{K}_\circ^-(x)\dot{\chi}(x)\big](\vartheta^{t(x)}) =\mathrm{K}_\circ^-(x,\vartheta^{t(x)})\chi(x\sqcup\vartheta^{t(x)})$,
 where $\mathrm{Q}$  is $q$-contractive normal operator $[\mathrm{Q},\mathrm{Q}^\ast]=0$. Further, suppose that the operator $\mathrm{K}^-_\circ(x)$ commutes with $\mathrm{K}^\circ_+(z)$ in $\mathcal{G}_\ast$ such that $[\mathrm{K}^-_\circ(x),\mathrm{K}^\circ_+(z)]=0$ for all $x\neq z\in\mathbb{X}^t$, and that $[\mathrm{K}^-_\circ(x),\mathrm{Q}^\otimes]=0=[\mathrm{K}^\circ_+(x),\mathrm{Q}^\otimes]$ in $\mathcal{G}_\ast$, then the $g$-commutator \textup{(\ref{qcom})}
satisfies the equation
\begin{equation}
\big[\Lambda(\mathbf{D}_\circ^-,\mathbb{X}^t),\Lambda(\mathbf{D}^\circ_+,\mathbb{X}^t)\big]_{g} =\int_{\mathbb{X}^t}\mathrm{K}^-_\circ(x)\mathrm{K}^\circ_+(x)\mathrm{d}x\otimes(\mathrm{Q}^\ast\mathrm{Q})^\otimes
\end{equation}
in $\mathcal{G}_\ast^t\otimes\mathcal{F}_{\ast t}$
if we have $\mathrm{K}^-_\circ(x,\vartheta^{t(x)}) \mathrm{Q}( x)=g(z,x)\mathrm{K}^-_\circ(x,\vartheta^{t(x)}\setminus z)\otimes\mathrm{I}_z$ for all $z\in\vartheta^{t(x)}$, and
${\mathrm{Q}^\ast}(z)\mathrm{K}^\circ_+(z,\vartheta^{t(z)}\sqcup x) =\mathrm{I}_x\otimes \mathrm{K}_+^\circ(z,\vartheta^{t(z)})g(z,x)$
 for all
 $x<z$ with $x\notin\vartheta^{t(z)}$.
\end{proposition}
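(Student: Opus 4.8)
The plan is to substitute the two displayed $\mathrm{Q}$-adapted formulas into one another, isolate the single contraction term that survives, and show that the remaining ``normal-ordered'' contributions cancel once the $g$-twist is applied. The continuity established in the theorem above guarantees that the products $\Lambda(\mathbf{D}^-_\circ,\mathbb{X}^t)\Lambda(\mathbf{D}^\circ_+,\mathbb{X}^t)$ and $\Lambda(\mathbf{D}^\circ_+,\mathbb{X}^t)\Lambda(\mathbf{D}^-_\circ,\mathbb{X}^t)$, as well as the ensuing $x$- and $z$-integrals, are well defined as projective-continuous maps $\mathit{G}_{+}\rightarrow\mathit{G}_{-}$, so all the rearrangements below are legitimate and the multiple-integral identity (\ref{2oned}) may be used freely.

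First I would expand $[\Lambda(\mathbf{D}^-_\circ)\Lambda(\mathbf{D}^\circ_+)\chi](\vartheta)$ by inserting the creation formula evaluated at the chain $\vartheta\sqcup x$ into the annihilation integral. The key observation is that the inner creation sum then runs over $z\in(\vartheta\sqcup x)^{t}=\vartheta^{t}\sqcup\{x\}$, splitting the expression into a \emph{diagonal} part $z=x$ and an \emph{off-diagonal} part $z\in\vartheta^{t}$. For the diagonal part, $(\vartheta\sqcup x)^{t(x)}=\vartheta^{t(x)}$ and $(\vartheta\sqcup x)_{t(x)}=\vartheta_{t(x)}$, so the created factor is $\mathrm{K}^\circ_+(x,\vartheta^{t(x)})\otimes\mathrm{Q}^{\otimes}(\vartheta_{t(x)})$ acting on $\chi(\vartheta)$; composing with the annihilation prefactor $\mathrm{K}^-_\circ(x,\vartheta^{t(x)})\otimes{\mathrm{Q}^\ast}^{\otimes}(\vartheta_{t(x)})$ and using the assumed diagonality of $\mathrm{K}^\circ_+(x)$ and $\mathrm{K}^-_\circ(x)$ on $\mathcal{X}^{t(x)}$, this collapses factor-by-factor to $\mathrm{K}^-_\circ(x)\mathrm{K}^\circ_+(x)$ on the $\mathcal{G}_\ast^{t(x)}$ component and to ${\mathrm{Q}^\ast}^{\otimes}(\vartheta_{t(x)})\mathrm{Q}^{\otimes}(\vartheta_{t(x)})=(\mathrm{Q}^\ast\mathrm{Q})^{\otimes}(\vartheta_{t(x)})$ on the remainder. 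Moving the $(\mathrm{Q}^\ast\mathrm{Q})$-block through the kernels with $[\mathrm{K}^-_\circ(x),\mathrm{Q}^\otimes]=0=[\mathrm{K}^\circ_+(x),\mathrm{Q}^\otimes]$, and using normality $[\mathrm{Q},\mathrm{Q}^\ast]=0$ to reorganise it as $(\mathrm{Q}^\ast\mathrm{Q})^\otimes$ on the future factor $\mathcal{F}_{\ast t}$, the diagonal part becomes exactly $\int_{\mathbb{X}^t}\mathrm{K}^-_\circ(x)\mathrm{K}^\circ_+(x)\,\mathrm{d}x\otimes(\mathrm{Q}^\ast\mathrm{Q})^\otimes$ in $\mathcal{G}_\ast^t\otimes\mathcal{F}_{\ast t}$.

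Next I would expand $[\Lambda(\mathbf{D}^\circ_+)\Lambda(\mathbf{D}^-_\circ)\chi](\vartheta)$: here the annihilation variable $x$ is a fresh integration parameter and the creation index $z$ runs over the pre-existing chain $\vartheta^{t}$, so the coincidence $x=z$ is $\lambda$-null and this composition yields only a normal-ordered term. It remains to prove that $g$ times this term equals the off-diagonal part of $\Lambda(\mathbf{D}^-_\circ)\Lambda(\mathbf{D}^\circ_+)$. Splitting both according to whether $t(x)<t(z)$ or $t(x)>t(z)$, in each case I would (i) commute the two kernels by $[\mathrm{K}^-_\circ(x),\mathrm{K}^\circ_+(z)]=0$ for $x\neq z$, (ii) realign the two Fock splittings by sliding the $\mathrm{Q}$-blocks across the kernels using $[\mathrm{K}^-_\circ(x),\mathrm{Q}^\otimes]=0=[\mathrm{K}^\circ_+(x),\mathrm{Q}^\otimes]$, and (iii) absorb the residual single-point factor: when $x$ lies below $z$ the extra $\mathrm{Q}(x)$ sitting next to $\mathrm{K}^-_\circ(x,\vartheta^{t(x)})$ (it appears because $x$ enters the future of $z$ in the chain $\vartheta\sqcup x$) is removed by the hypothesis $\mathrm{K}^-_\circ(x,\vartheta^{t(x)})\mathrm{Q}(x)=g(z,x)\mathrm{K}^-_\circ(x,\vartheta^{t(x)}\setminus z)\otimes\mathrm{I}_z$, and dually the factor ${\mathrm{Q}^\ast}(z)$ attached to $\mathrm{K}^\circ_+(z,\vartheta^{t(z)}\sqcup x)$ is removed by ${\mathrm{Q}^\ast}(z)\mathrm{K}^\circ_+(z,\vartheta^{t(z)}\sqcup x)=\mathrm{I}_x\otimes\mathrm{K}^\circ_+(z,\vartheta^{t(z)})g(z,x)$; either substitution produces exactly the scalar weight $g(z,x)$ called for by the definition of $g\big(\Lambda(\mathbf{D}^\circ_+)\Lambda(\mathbf{D}^-_\circ)\big)$. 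Hence the two normal-ordered contributions cancel in $[\Lambda(\mathbf{D}^-_\circ),\Lambda(\mathbf{D}^\circ_+)]_g$, leaving only the diagonal term computed above, which is the asserted identity.

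The step I expect to be the genuine obstacle is (iii): tracking precisely which points carry $\mathrm{Q}$ and which carry $\mathrm{Q}^\ast$, and how the chain arguments $\vartheta^{t(\cdot)}$, $\vartheta_{t(\cdot)}$ shift when the single extra point $x$ is inserted above or below $z$, so that the two reordering hypotheses are applied in exactly the configuration in which they are stated and the $g(z,x)$ weights emerge with the correct arguments in both ordering regimes; everything else is the Guichardet combinatorics already encapsulated in (\ref{2oned}).
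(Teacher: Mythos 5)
Your proposal takes essentially the same route as the paper's proof: expand both compositions over the Guichardet chains, identify the diagonal $z=x$ term of $\Lambda(\mathbf{D}^-_\circ)\Lambda(\mathbf{D}^\circ_+)$ with the contraction $\int_{\mathbb{X}^t}\mathrm{K}^-_\circ(x)\mathrm{K}^\circ_+(x)\,\mathrm{d}x\otimes(\mathrm{Q}^\ast\mathrm{Q})^\otimes$, and equate the off-diagonal integrands with those of $g\big(\Lambda(\mathbf{D}^\circ_+)\Lambda(\mathbf{D}^-_\circ)\big)$ by means of the two reordering hypotheses, exactly as the paper does. The only slip is a label inversion in your step (iii): the hypothesis $\mathrm{K}^-_\circ(x,\vartheta^{t(x)})\mathrm{Q}(x)=g(z,x)\mathrm{K}^-_\circ(x,\vartheta^{t(x)}\setminus z)\otimes\mathrm{I}_z$ is the one used in the regime $z\in\vartheta^{t(x)}$, i.e.\ $z<x$ (consistently with your own parenthetical that $x$ then enters the future of $z$), while the ${\mathrm{Q}^\ast}(z)$ identity is the one that handles $x<z$.
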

\begin{proof}
 First we calculate \newline$[\Lambda(\mathbf{D}_\circ^-,\mathbb{X}^t)\Lambda(\mathbf{D}^\circ_+,\mathbb{X}^t)\chi](\vartheta)- \big[\int_{\mathbb{X}^t}\mathrm{K}^-_\circ(x)\mathrm{K}^\circ_+(x) \mathrm{d}x\otimes(\mathrm{Q}^\ast\mathrm{Q})^\otimes\chi\big](\vartheta)=$
\begin{align*}
=&\int_{\mathbb{X}^t}\sum_{z\in\vartheta^{t(x)}}\mathrm{K}^-_\circ(x,\vartheta^{t(x)}) \mathrm{K}^\circ_+(z,\vartheta^{t(z)}) \mathrm{Q}^{\otimes}(\vartheta^{t(x)}_{t(z)}\sqcup x) (\mathrm{Q}^\ast\mathrm{Q})^{\otimes}(\vartheta_{t(x)}) \chi(x\sqcup\vartheta\setminus z)\mathrm{d}x\\
&+\sum_{z\in\vartheta}\int_{\mathbb{X}^{t(z)}}\mathrm{K}^-_\circ(x,\vartheta^{t(x)}) {\mathrm{Q}^\ast}^{\otimes}(\vartheta^{t(z)}_{t(x)}\sqcup z) \mathrm{K}^\circ_+(z,\vartheta^{t(z)}\sqcup x) (\mathrm{Q}^\ast\mathrm{Q})^{\otimes}(\vartheta_{t(z)})\chi(x\sqcup\vartheta\setminus z)\mathrm{d}x,
\end{align*}
where $\vartheta_s^t=\vartheta\cap\mathbb{X}_s^t$, and then we calculate $g[\Lambda(\mathbf{D}_+^\circ,\mathbb{X}^t)\Lambda(\mathbf{D}_\circ^- ,\mathbb{X}^t)\chi](\vartheta)=$
\begin{align*}
=&\int_{\mathbb{X}^t}\sum_{z\in\vartheta^{t(x)}}g(z,x))\mathrm{K}^\circ_+(z,\vartheta^{t(z)}) \mathrm{Q}^{\otimes}(\vartheta^{t(x)}_{t(z)})\mathrm{K}^-_\circ(x,\vartheta^{t(x)}\setminus z)(\mathrm{Q}^\ast\mathrm{Q})^{\otimes}(\vartheta_{t(x)}) \chi(x\sqcup\vartheta\setminus z)\mathrm{d}x\\
&+\sum_{z\in\vartheta}\int_{\mathbb{X}^{t(z)}}g(z,x)\mathrm{K}_+^\circ(z,\vartheta^{t(z)}) \mathrm{K}^-_\circ(x,\vartheta^{t(x)}) {\mathrm{Q}^\ast}^{\otimes}(\vartheta^{t(z)}_{t(x)}) (\mathrm{Q}^\ast\mathrm{Q})^{\otimes}(\vartheta_{t(z)})\chi(x\sqcup\vartheta\setminus z)\mathrm{d}x,
\end{align*}
thus when equating the integrands for these two formula, using the commutativity of the factors, we find that for $z\in\vartheta^{t(x)}$
\[
\mathrm{K}^-_\circ(x,\vartheta^{t(x)}) \mathrm{Q}( x)=g(z,x)\mathrm{K}^-_\circ(x,\vartheta^{t(x)}\setminus z)
\]
and for $x<z$ we find
\[
 {\mathrm{Q}^\ast}(z)\mathrm{K}^\circ_+(z,\vartheta^{t(z)}\sqcup x) = \mathrm{K}_+^\circ(z,\vartheta^{t(z)})g(z,x).
\]
\end{proof}
Notice that when $g(z,x)$ is separable we have $g(z,x)\mathrm{I}_x=h(z)\mathrm{Q}(x)$ for $z<x$, and $g(z,x)\mathrm{I}_z=\mathrm{Q}^\ast(z)f(x)$ when $z>x$, and also note that if $h=f^\ast$ then $g$ is self-adjoint, $g^\ast(x,z)=g(z,x)$, having the form
\begin{equation}g(z,x)=h^\ast(z)q(x)1_{t(x)}(z)+q^\ast(z)h(x)1_{ t(z)}(x),\quad z\neq x\end{equation}
where $1_{t(x)}(z)=1$ if $z<x$ otherwise 0.
Further, from this separability of $g$ it follows that $\mathrm{K}^-_\circ(x,\upsilon)=\mathrm{k}^-_\circ(x) \otimes h^\otimes(\upsilon)\mathrm{I}_\upsilon$ for all $\upsilon\in\mathcal{X}^{t(x)}$, and that $\mathrm{K}^\circ_+(z,\upsilon)=\mathrm{k}^\circ_+(z)\otimes f^\otimes(\upsilon)\mathrm{I}_\upsilon$ for all $\upsilon\in\mathcal{X}^{t(z)}$. 
Consider for example the case when $g(z,x)=\exp\{\mathrm{i}p(x-z)\}$, where $p:\mathbb{X}\rightarrow\mathbb{R}$ is linear functional $p(x-z)=p(x)-p(z)$.
\begin{corollary}
Suppose that  $g(x,z)$, $x\neq z$, is a constant $\mathbb{R}$-valued function of the form
 \[
 g(z,x)=1(z)q(x)1_{t(x)}(z)+q(z)1(x)1_{ t(z)}(x),\quad q(x)=q\in\mathbb{R}\;\forall\;x
  \]
  where $1_t(x)=1$ if $t(x)<t$ and otherwise zero, then we define the $q$-commutator
\begin{equation}
\big[\Lambda(\mathbf{D}^-_\circ),\Lambda(\mathbf{D}^\circ_+)\big]_{q}= \Lambda(\mathbf{D}^-_\circ)\Lambda(\mathbf{D}^\circ_+)-q\Lambda(\mathbf{D}^\circ_+) \Lambda(\mathbf{D}_\circ^-),
\end{equation}
and we recover the commutation relations of the Bosonic and Fermionic quantum fields under the above conditions with $\mathrm{Q}(x)=q\mathrm{I}$  when we respectively set $q=+1$ and $q=-1$, such that
\[
[\Lambda(\mathbf{D}_\circ^-,\mathbb{X}^t),\Lambda(\mathbf{D}^\circ_+,\mathbb{X}^t)]_{\pm1} =\int_{\mathbb{X}^t}\mathrm{K}^-_\circ(x)\mathrm{K}^\circ_+(x)\mathrm{d}x\otimes\mathrm{I}^\otimes
\]
and thus we have the representations of the Bosonic and Fermionic quantum fields in the Guichardet-Fock space as $\pm\mathrm{I}$-adapted QS integrals.

Further,
the trivial commutator of the {monotonic quantum field} is obtained when $\mathrm{Q}=\mathrm{O}$, corresponding also to $q=0$, such that
\[
[\Lambda(\mathbf{D}_\circ^-,\mathbb{X}^t),\Lambda(\mathbf{D}^\circ_+,\mathbb{X}^t)]_{0} =\int_{\mathbb{X}^t}\mathrm{K}^-_\circ(x)\mathrm{K}^\circ_+(x)\mathrm{d}x\otimes\mathrm{O}^\otimes
\]
where we have made use of the identification $\mathrm{D}^\mu_\nu(\emptyset)=0$. Such $\mathrm{O}$-adapted monotonic calculus, also called the {vacuum adapted} calculus as $\mathrm{O}^\otimes=\mathrm{P}_{\emptyset}$, was also studied by Belton in \cite{Bel98}.
\end{corollary}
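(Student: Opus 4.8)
The plan is to read the corollary off from the Proposition by checking that the prescribed $g$ and $\mathrm{Q}$ fall under its hypotheses, and then merely evaluating the factor $(\mathrm{Q}^\ast\mathrm{Q})^\otimes$ in the three cases. First I would identify the prescribed
\[
g(z,x)=1(z)q(x)1_{t(x)}(z)+q(z)1(x)1_{t(z)}(x),\qquad q(x)\equiv q\in\mathbb{R},
\]
with the separable self-adjoint form exhibited immediately after the Proposition, namely $h=f\equiv 1$ (which satisfies the self-adjointness condition $h=f^\ast$, so $g^\ast(x,z)=g(z,x)$) and constant $q(\cdot)\equiv q$. By that remark the diagonal kernels then have the product form $\mathrm{K}^-_\circ(x,\upsilon)=\mathrm{k}^-_\circ(x)\otimes\mathrm{I}_\upsilon$ and $\mathrm{K}^\circ_+(z,\upsilon)=\mathrm{k}^\circ_+(z)\otimes\mathrm{I}_\upsilon$ for chains $\upsilon$ in $\mathcal{X}^{t(x)}$, $\mathcal{X}^{t(z)}$ respectively, and $\mathrm{Q}(x)=q\mathrm{I}$ is a $q$-contractive normal operator, $[\mathrm{Q},\mathrm{Q}^\ast]=0$ holding trivially.

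Next I would verify the two consistency identities of the Proposition. For $z\in\vartheta^{t(x)}$ one has $z<x$, hence $g(z,x)=q$, and the left side $\mathrm{K}^-_\circ(x,\vartheta^{t(x)})\mathrm{Q}(x)=q\,\mathrm{k}^-_\circ(x)\otimes\mathrm{I}_{\vartheta^{t(x)}}$ equals the right side $g(z,x)\,\mathrm{K}^-_\circ(x,\vartheta^{t(x)}\setminus z)\otimes\mathrm{I}_z=q\,\mathrm{k}^-_\circ(x)\otimes\mathrm{I}_{\vartheta^{t(x)}\setminus z}\otimes\mathrm{I}_z$; symmetrically, for $x<z$ (so again $g(z,x)=q$) the identity $\mathrm{Q}^\ast(z)\mathrm{K}^\circ_+(z,\vartheta^{t(z)}\sqcup x)=\mathrm{I}_x\otimes\mathrm{K}^\circ_+(z,\vartheta^{t(z)})g(z,x)$ reduces to $q\,\mathrm{k}^\circ_+(z)\otimes\mathrm{I}_{\vartheta^{t(z)}\sqcup x}$ on both sides. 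The remaining hypotheses — $[\mathrm{K}^-_\circ(x),\mathrm{K}^\circ_+(z)]=0$ for $x\neq z$, the diagonality of $\mathrm{K}^-_\circ$ and $\mathrm{K}^\circ_+$, and $[\mathrm{K}^\cdot_\cdot(x),\mathrm{Q}^\otimes]=0$ — hold because $\mathrm{Q}^\otimes$ is a scalar multiple of the identity and the product kernels act on mutually disjoint tensor slots. Thus the Proposition applies with this $g$.

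It then remains to turn the conclusion of the Proposition into the stated form. Since $g(z,x)\equiv q$ is constant off the (null) diagonal, the bilinear action of $g$ factors the scalar $q$ out of both integrals, so that $\big[\Lambda(\mathbf{D}^-_\circ),\Lambda(\mathbf{D}^\circ_+)\big]_g$ is exactly the $q$-commutator displayed in the statement, and the Proposition gives
\[
\big[\Lambda(\mathbf{D}^-_\circ,\mathbb{X}^t),\Lambda(\mathbf{D}^\circ_+,\mathbb{X}^t)\big]_q=\int_{\mathbb{X}^t}\mathrm{K}^-_\circ(x)\mathrm{K}^\circ_+(x)\,\mathrm{d}x\otimes(\mathrm{Q}^\ast\mathrm{Q})^\otimes .
\]
Because $\mathrm{Q}=q\mathrm{I}$ with $q$ real, $\mathrm{Q}^\ast\mathrm{Q}=q^2\mathrm{I}$, so $(\mathrm{Q}^\ast\mathrm{Q})^\otimes=\mathrm{I}^\otimes$ when $q=\pm 1$ — yielding the Bosonic commutation relation for $q=+1$ and the Fermionic anticommutation relation for $q=-1$, the sign being absorbed into the $q$-commutator itself — and $(\mathrm{Q}^\ast\mathrm{Q})^\otimes=\mathrm{O}^\otimes$ when $q=0$. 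The only point deserving a little care is this last, degenerate case: here $\mathrm{O}^\otimes(\varkappa)=\otimes_{x\in\varkappa}\mathrm{O}$ annihilates every chain except the empty one, so $\mathrm{O}^\otimes$ is the vacuum projection $\mathrm{P}_\emptyset$ on $\mathcal{F}_\ast$, and since the $\mathrm{O}$-adapted integrands satisfy $\mathrm{D}^\mu_\nu(\emptyset)=0$ no boundary term intervenes, leaving precisely the asserted trivial commutator $\int_{\mathbb{X}^t}\mathrm{K}^-_\circ(x)\mathrm{K}^\circ_+(x)\,\mathrm{d}x\otimes\mathrm{O}^\otimes$. Everything else is bookkeeping against the chain splittings $\vartheta=\vartheta^{t(x)}\sqcup\vartheta_{t(x)}$ already set up in the Proposition; there is no substantive obstacle.
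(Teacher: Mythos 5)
Your proposal is correct and follows exactly the route the paper intends: the corollary is a direct specialization of the Proposition, and you verify that the constant off-diagonal $g\equiv q$, the product kernels $\mathrm{K}^\mu_\nu(x,\upsilon)=\mathrm{k}^\mu_\nu(x)\otimes\mathrm{I}_\upsilon$ forced by separability with $h=f\equiv 1$, and $\mathrm{Q}=q\mathrm{I}$ satisfy all its hypotheses, after which the conclusion reduces to evaluating $(\mathrm{Q}^\ast\mathrm{Q})^\otimes=(q^2\mathrm{I})^\otimes$ in the cases $q=\pm1,0$. The paper supplies no separate argument for this corollary, and your write-up fills in precisely the routine checks it leaves implicit, including the correct reading of $\mathrm{O}^\otimes$ as the vacuum projection $\mathrm{P}_\emptyset$.
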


Obviously the QS integral $\mathrm{X}\left( t\right) =\boldsymbol{i}%
_{0}^{t}\left( \mathbf{D}\right) $ of any $\mathrm{Q}$-adapted integrand $%
\mathbf{D}\left( x\right) $ is an operator-valued \textrm{Q}-adapted process in
the sense that $\mathrm{X}\left( t\right) =\mathrm{X}^{t}\otimes \mathrm{Q}_t%
^{\otimes }$. The approximation of this integral in the class of adapted step functions,  leads by continuity to the usual definition of the quantum-stochastic integral
$\boldsymbol{i}_{0}^{t}(\mathbf{D})$ which was given by Hudson and
Parthasarathy corresponding to the $\mathrm{I}$-adapted case with $\mathbb{X}=\mathbb{R}_{+}$ such that $x\equiv t(x)$. That is
the weak limit of integral sums
\[
\boldsymbol{i}_{0}^{t}(\mathbf{D}_{n})=\int_{0}^{t}\Lambda(\mathbf{D}%
_{n},\mathrm{d}x)=\sum_{j=1}^{n}\mathrm{D}_{\nu}^{\mu}(x_{j})\mathrm{A}_{\mu
}^{\nu}(\bigtriangleup_{j}),
\]
where $\mathbf{D}(x_{j})=\mathbf{D}_{n}(x)$ for $x\in\lbrack x_{j},x_{j+1})$ is
an adapted approximation corresponding to the decomposition $\mathbb{R}%
_{+}=\sum_{j=1}^{n}\bigtriangleup_{i}$ into the intervals $\bigtriangleup
_{j}=[x_{j},x_{j+1})$ given by the chain $x_{0}=0<x_{1}<\cdots<x_{n}%
<x_{n+1}=\infty$, and $\mathrm{D}_{\nu}^{\mu}(x)\mathrm{A}_{\mu}^{\nu
}(\bigtriangleup)$ is the sum of the operators (\ref{2onec}) with functions
$\mathrm{D}_{\nu}^{\mu}(x)$ constant on $\bigtriangleup$ which can therefore
be pulled out in front of the integrals $\Lambda_{\mu}^{\nu} $ such that $\mathrm{A}^\nu_\mu(\triangle)=\int_{\triangle}\Lambda^\nu_\mu(\mathrm{d}x)$.

In particular, for $\mathrm{D}_{+}^{-}=0=\mathrm{D}_{\circ}^{\circ}$ and
$\mathrm{D}_{\circ}^{-}=k\otimes\mathrm{\hat{1}}=\mathrm{D}_{+}^{\circ}$,
where $\mathrm{\hat{1}}=\mathrm{I}^{\otimes}$ is the unit operator in
$\mathcal{F}_{\ast}$ and $k(x)$ is a scalar locally square integrable function
corresponding to the case $\mathfrak{k}_{x}=\mathbb{C}=\mathfrak{h}$, we
obtain the It\^{o} definition of the Wiener integral
\[
\dot{I}_{0}^{t}(k)=\int_{0}^{t}k(x)w(\mathrm{d}x),\quad\int_{0}^{t}%
k(x)\widehat{w}(\mathrm{d}x)=\boldsymbol{i}_{0}^{t}(\mathbf{D})
\]
with respect to the stochastic measure $w(\bigtriangleup)$, $\bigtriangleup
\in\mathfrak{F}_{\mathbb{X}}$ on $\mathbb{R}_{+}$, represented in $\mathcal{G}_{\ast
}=\mathcal{F}_{\ast}$ by the operators $\widehat{w}(\bigtriangleup
)=\mathrm{A}_{\circ}^{+}(\bigtriangleup)+\mathrm{A}_{-}^{\circ}(\bigtriangleup)$. We
also note that the multiple integral \textup{(\ref{2onee})} in the trivially
adapted case $\mathrm{M}(\boldsymbol{\upsilon})=M(\boldsymbol{\upsilon
})\otimes\mathrm{I}^{\otimes}$ defines the Fock representation of the
generalized Maassen-Meyer kernels {\cite{EvaH88}, {\cite{Mey87}} and in the
case
\[
M(\boldsymbol{\upsilon})=m({\normalsize \upsilon}_{\circ}^{-}\sqcup
{\normalsize \upsilon}_{+}^{\circ})\delta_{\emptyset}({\normalsize \upsilon
}_{+}^{-})\delta_{\emptyset}({\normalsize \upsilon}_{\circ}^{\circ}%
),\quad\delta_{\emptyset}({\normalsize \upsilon})=%
\begin{cases}
1, & {\normalsize \upsilon}=\emptyset,\\
0, & {\normalsize \upsilon}\neq\emptyset
\end{cases}
\]
it leads to the multiple stochastic integrals $\boldsymbol{\imath}_{0}%
^{t}(\mathrm{M})=\widehat{I}_{0}^{t}(m)$,
\[
I_{0}^{t}(m)=\sum_{n=0}^{\infty}\quad\idotsint\limits_{0\leq t_{1}%
<\cdots<t_{n}<t}m(x_{1},\ldots,x_{n})w(\mathrm{d}x_{1})\ldots w(\mathrm{d}%
x_{n})
\]
of the generalized functions $m\in\bigcup_{r^{-1}\in\mathfrak{p}_{0}%
}\mathit{G}_{\star}(r)$, that is, to the Hida distributions {\cite{Hid80}, {\cite{PotS89}} of the Wiener measure $w(\bigtriangleup)$ represented as
$\widehat{w}(\bigtriangleup)$. Thus, we can consider the trivially adapted QS
multiple integrals $\boldsymbol{\imath}_{0}^{t}(M\otimes\mathrm{I}^\otimes)$ as quantum Hida
operator-distributions whose properties are described in the following corollary when $\mathrm{Q}=\mathrm{I}$.

\begin{corollary}
\label{C 1} Suppose that $\mathrm{M}(\boldsymbol{\upsilon}%
)=M(\boldsymbol{\upsilon})\otimes\mathrm{Q}^{\otimes}$ where $\|\mathrm{Q}^\otimes\|_q=1$, i.e. the
operator-function $\mathrm{M}$ is defined by the $q$-contractive ampliation of the $\star$-kernel $M$
with $\Vert M\Vert_{t}^{s}(r)<\infty$, where
\[
M%
\begin{pmatrix}
\upsilon_{+}^{-}, & \upsilon_{\circ}^{-}\\
\upsilon_{+}^{\circ}, & \upsilon_{\circ}^{\circ}%
\end{pmatrix}
:\mathfrak{k}^{\otimes}\left(  \upsilon_{\circ}^{-}\sqcup\upsilon_{\circ
}^{\circ}\right)\otimes\mathfrak{h}  \rightarrow\mathfrak{k}^{\otimes}\left(
\upsilon_{\circ}^{\circ}\sqcup\upsilon_{+}^{\circ}\right)\otimes\mathfrak{h}  ,
\]
and $\Vert M\Vert_{t}^{s}(r)<\infty$ for all $t\in\mathbb{R}_+$, and for some $r,s$ with $r^{-1},s^{-1}\in\mathfrak{p}_0$, where
\[
\Vert M\Vert_{t}^{s}(r)=\int_{\mathcal{X}^{t}}\mathrm{d}\upsilon_{+}%
^{-}\Big(\int_{\mathcal{X}^{t}}\mathrm{d}\upsilon_{+}^{\circ}\int
_{\mathcal{X}^{t}}\mathrm{d}\upsilon_{\circ}^{-}\mathrm{ess}\sup
_{\upsilon_{\circ}^{\circ}\in\mathcal{X}^{t}}\{s(\upsilon_{\circ}^{\circ
})\Vert M(\boldsymbol{\upsilon})\Vert\}^{2}r(\upsilon_{+}^{\circ}%
\sqcup\upsilon_{\circ}^{-})\Big)^{1/2}%
\]
and $r(\upsilon)=\prod_{x\in\upsilon
}r(x)$, $s(\upsilon)=\prod_{x\in\upsilon}s(x)$. Then the multiple integral \textup{(\ref{2onee})} defines a
$\mathrm{Q}$-adapted family $\mathrm{T}_{t}$, $t\in\mathbb{R}_{+}$, of $p$-bounded
operators \[\mathrm{T}_{t}=\boldsymbol{\imath}_{0}^{t}({M}\otimes\mathrm{Q}^\otimes),\quad
\Vert\mathrm{T}_{t}\Vert_{p}\leq\Vert M\Vert_{t}^{s}(r)\] for $p\geq
r^{-1}+q+s^{-1}$, with bounded $\mathrm{Q}$-adapted quantum-stochastic derivatives
\[\mathrm{D}_{\nu}^{\mu}(x)=\boldsymbol{\imath}_{0}^{t(x)}(\dot{M}%
(\mathbf{x}_{\nu}^{\mu})\otimes\mathrm{Q}^\otimes)\equiv\mathrm{K}^\mu_\nu(x)\otimes\mathrm{Q}^\otimes_{t(x)}.\]
\end{corollary}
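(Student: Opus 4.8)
The plan is to deduce this corollary from the preceding Theorem by reducing its hypotheses and conclusions to the $\mathrm{Q}$-ampliated situation. The crucial elementary observation is the norm identity $\|\mathrm{M}(\boldsymbol{\upsilon})\|_q=\|M(\boldsymbol{\upsilon})\|$: since $\mathrm{M}(\boldsymbol{\upsilon})=M(\boldsymbol{\upsilon})\otimes\mathrm{Q}^\otimes$ with $M(\boldsymbol{\upsilon})$ acting on $\mathfrak{k}^\otimes(\upsilon_\circ^-\sqcup\upsilon_\circ^\circ)\otimes\mathfrak{h}$ and $\mathrm{Q}^\otimes$ acting on the complementary Fock factor, and since the $q$-weighted Fock norm factorizes over this tensor splitting, the operator norm between the scaled spaces is the product $\|M(\boldsymbol{\upsilon})\|\cdot\|\mathrm{Q}^\otimes\|_q$, which equals $\|M(\boldsymbol{\upsilon})\|$ because $\|\mathrm{Q}^\otimes\|_q=1$. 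Substituting this termwise into the definition of $\|\mathrm{M}\|_{q,t}^s(r)$ used in (\ref{2onef}) gives exactly $\|\mathrm{M}\|_{q,t}^s(r)=\|M\|_t^s(r)$ with $\|M\|_t^s(r)$ as written in the statement. Hence the assumption $\|M\|_t^s(r)<\infty$ is precisely the local integrability hypothesis (\ref{2onef}) of the Theorem, so the estimate (\ref{2oneh}) immediately yields $\|\mathrm{T}_t\|_p\le\|M\|_t^s(r)$ for all $p\ge r^{-1}+q+s^{-1}$; the same substitution turns the Theorem's derivative estimates into boundedness of the $\mathrm{D}_\nu^\mu(x)$ with the same bound.

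Next I would use the differential representation (\ref{2onej}), which gives $\mathrm{T}_t=\mathrm{M}(\emptyset)+\boldsymbol{i}_0^t(\mathbf{D})$ with $\mathrm{D}_\nu^\mu(x)=\boldsymbol{\imath}_0^{t(x)}(\dot{\mathrm{M}}(\mathbf{x}_\nu^\mu))$ and $\dot{\mathrm{M}}(\mathbf{x},\upsilon_\cdot^\cdot)=\mathrm{M}(\upsilon_\cdot^\cdot\sqcup\mathbf{x})$. Because $\mathrm{M}$ is a $\mathrm{Q}$-ampliation of $M$, its point derivative is again one: $\dot{\mathrm{M}}(\mathbf{x}_\nu^\mu)=\dot{M}(\mathbf{x}_\nu^\mu)\otimes\mathrm{Q}^\otimes$. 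The main step is then to verify that the multiple integral (\ref{2onee}), when the multi-integrand is a $\mathrm{Q}$-ampliation and the integration region is $\mathcal{X}^{t(x)}$, is itself the $\mathrm{Q}$-ampliation at time $t(x)$, i.e. $\boldsymbol{\imath}_0^{t(x)}(\dot{M}(\mathbf{x}_\nu^\mu)\otimes\mathrm{Q}^\otimes)=\mathrm{K}_\nu^\mu(x)\otimes\mathrm{Q}_{t(x)}^\otimes$, with $\mathrm{K}_\nu^\mu(x)=\boldsymbol{\imath}_0^{t(x)}(\dot{M}(\mathbf{x}_\nu^\mu))$ the restricted action $\mathit{G}_+^{t(x)}\to\mathit{G}_-^{t(x)}$. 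This I would check by inspecting (\ref{2onee}) directly: all of the summation and integration variables $\upsilon_+^-,\upsilon_+^\circ,\upsilon_\circ^-,\upsilon_\circ^\circ$ lie in $\mathcal{X}^{t(x)}$, and the single-point splits $\mathring{\chi}$ appearing there only adjoin points of $\mathbb{X}^{t(x)}$, so under the Fock factorization $\mathcal{G}_\ast=\mathcal{G}_\ast^{t(x)}\otimes\mathcal{F}_{\ast [t(x)}$ every one of these operations acts as the identity on the second factor; the operators $M(\boldsymbol{\upsilon})$ act inside $\mathcal{G}_\ast^{t(x)}$, while the ampliating tensor factor is, by the definition of $\mathrm{Q}$-adaptedness, exactly $\mathrm{Q}_{t(x)}^\otimes=\otimes_{y\in\varkappa}\mathrm{Q}(y)$ on chains $\varkappa\in\mathcal{X}_{t(x)}$. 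Thus each $\mathrm{D}_\nu^\mu(x)$ has the product form $\mathrm{K}_\nu^\mu(x)\otimes\mathrm{Q}_{t(x)}^\otimes$ defining $\mathrm{Q}$-adaptedness.

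Finally I would assemble the conclusion. The constant term $\mathrm{M}(\emptyset)=M(\emptyset)\otimes\mathrm{Q}^\otimes$ is $\mathrm{Q}$-adapted at every $t$, by splitting $\mathrm{Q}^\otimes=\mathrm{Q}_{0,t}^\otimes\otimes\mathrm{Q}_t^\otimes$ over $\mathbb{X}^t\sqcup\mathbb{X}_{[t}$, and $\boldsymbol{i}_0^t(\mathbf{D})$ of the $\mathrm{Q}$-adapted integrand $\mathbf{D}$ is, by the remark following the Theorem, a $\mathrm{Q}$-adapted operator process of the form $\mathrm{X}^t\otimes\mathrm{Q}_t^\otimes$; summing, $\mathrm{T}_t=\boldsymbol{\imath}_0^t(M\otimes\mathrm{Q}^\otimes)$ is $\mathrm{Q}$-adapted with the asserted $p$-bound and bounded $\mathrm{Q}$-adapted derivatives. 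The only place where genuine care is needed is the key step of the second paragraph: confirming that the point-splits and chain-summations in (\ref{2onee}) restricted to $\mathbb{X}^{t(x)}$ are compatible with the Guichardet tensor splitting and leave the $\mathrm{Q}$-tail untouched. This is a bookkeeping argument on the Fock decomposition rather than a new analytic estimate — all the quantitative content is inherited from the Theorem through the identity $\|\mathrm{M}\|_{q,t}^s(r)=\|M\|_t^s(r)$.
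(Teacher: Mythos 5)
Your proposal is correct and follows essentially the same route as the paper: the entire quantitative content is the bound $\Vert M(\boldsymbol{\upsilon})\otimes\mathrm{Q}^{\otimes}\Vert_{q}\leq\Vert M(\boldsymbol{\upsilon})\Vert$ (the paper only needs the inequality, not the equality you assert), which identifies $\Vert\mathrm{M}\Vert_{q,t}^{s}(r)\leq\Vert M\Vert_{t}^{s}(r)$ and lets the Theorem's estimate (\ref{2oneh}) and differential formula (\ref{2onej}) do the rest. Your extra bookkeeping verifying that the point derivative and the multiple integral over $\mathcal{X}^{t(x)}$ preserve the $\mathrm{Q}$-ampliated product form is a sound elaboration of what the paper leaves implicit in ``the result immediately follows.''
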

\begin{proof}
Since $\|M(\boldsymbol{\upsilon})\otimes\mathrm{Q}^\otimes\|_q\leq\|M(\boldsymbol{\upsilon})\|$ the result immediately follows from the inequality $\|\boldsymbol{\imath}^t_0(M\otimes\mathrm{Q}^\otimes) \|_p\leq\|M\otimes\mathrm{Q}^\otimes\|^s_{q,t}(r)\leq\|M\|^s_t(r)$ for $p\geq
r^{-1}+q+s^{-1}$.
\end{proof}

\end{document}